\documentclass[onecolumn,10pt]{IEEEtran}

\makeatletter

\usepackage[all]{xy}
\usepackage[latin1]{inputenc}
\usepackage{amsfonts,amsmath,amssymb,amsthm}
\usepackage{indentfirst,latexsym,amscd}
\usepackage{amsbsy}
\usepackage{graphicx,multirow,bm}
\usepackage{float}
\usepackage{mathrsfs}

\newtheorem{lemma}{Lemma}
\newtheorem{theorem}{Theorem}

\newtheorem{rem}{Remark}

\title{ A class of optimal authentication codes with secrecy
\author{Haibo Liu, Chengzhi Wei, Qunying Liao}
\thanks {The work of H.B. Liu is supported in part by the Sichuan Natural Science Foundation with NO.2024NSFSC0417 and NO.2026NSFSC0138, the work of Q.Y. Liao is supported in part by the National Natural Science Foundation of China with No.12471494.}
\thanks{H.B.Liu, School of Applied Mathematics, Chengdu University of Information Technology,
 \small Chengdu, Sichuan, China (Email:liuhaibo@cuit.edu.cn).}
\thanks{C.Z. Wei, School of Applied Mathematics, Chengdu University of Information Technology,
 \small Chengdu, Sichuan, China (Email:2263759746@qq.com).}
 \thanks{Q.Y. Liao, School of Mathematical Sciences, Sichuan Normal University,
 \small Chengdu, Sichuan, China (Email:qunyingliao@sicnu.edu.cn)}
}
\date{}

\begin{document}
\baselineskip15pt \maketitle

\begin{abstract}

  In this paper, a class of linear authentication codes with secrecy, which are equipped with simple encoding rules and can be easily implemented, is constructed. By means of a special Weil sum, the maximum success probabilities of impersonation attack (denoted by $P_I$) and of substitution attack (denoted by $P_S$) for these codes are explicitly derived. It is further proven that the codes are asymptotically optimal, in the sense that both the information theoretic bound and the combinatorial bound for authentication codes are asymptotically attained by $P_I$ and $P_S$.

\textbf{Keywords}: authentication codes, secrecy, impersonation attack, substitution attack, asymptotically optimal

\end{abstract}
\section{INTRODUCTION}
$ $
The authentication model established by Simmons\cite{ref1} involves three parties: a transmitter, a receiver, and an opponent. The transmitter wants to send some information namely source states to the receiver through a public communication channel. The sender and receiver share a secret key $k\in\mathscr{K}$, where $\mathscr{K}$ is the key space. Each secret key $k$ corresponds to a encoding rule $E_{k}$, the encoding rule $E_{k}$ is a one-to-one transformation, which transforms a source state $s$ into a message $m=E_{k}(s)$, and then sends $m$ to the receiver through the channel. All the possible source states $s$ form the source state space $\mathscr{S}$, and all the possible messages $m$ form a space $\mathscr{M}$, which is called the message space. All the possible encoding rules form the encoding rules space $\mathscr{E} =\left \{ E_{k}:k\in \mathscr{K}   \right \} $. The source state space and the key space are associated with a probability distribution. In this work, we assume that the probability distribution for both state space and the key space is uniform. An authentication code is defined by the four-tuple $(\mathscr{S} ,\mathscr{K} ,\mathscr{M},\mathscr{E})$.

There are two types of authentication codes: authentication codes with secrecy and those without secrecy. In an authentication code with secrecy, the secret key $k$ shared by the sender and receiver is used for both encryption and authentication purpose. In this case, the source state $s$ is hidden in the encoded message $m$, and one cannot recover $s$ from $m$ without the knowledge of $k$ or $E_k$. In an authentication code without secrecy, the source state is sent to the receiver in plaintext, and the secret key is used only for authentication purpose. In this paper, we focus on only authentication codes with secrecy. It is possible that an encoding rule may map a source state into more than one message (this is called splitting). In this paper, we consider only authentication codes without splitting. With this authentication model, we consider two kinds of attacks, the impersonation and substitution attacks. In the impersonation attack, an opponent can insert his message into the channel, and hopes the receiver will accept it as authentic. In the substitution attack, the opponent observes a message sent by the transmitter, and replaces it with his message, and wishes the receiver will accept it as authentic. We use $P_{I}$ and $P_{S}$ to denote the maximum success probabilities with respect to the two types of attacks.

Several approaches had been proposed for constructing authentication codes, both with and without secrecy. Most of these constructions relied on combinatorial designs and finite geometries (see \cite{ref5,ref6,ref7,ref8,ref9,ref10,ref11,ref12,ref13}), but they were generally difficult to implement. Ding et al. \cite{ref3} presented three algebraic constructions of authentication codes with secrecy, computed $P_I$, and established an upper bound on $P_S$. Moreover, using results from Gaussian sums and quadratic forms, they proved that $P_I$ was asymptotically optimal with respect to the information theoretic bound and the combinatorial bound. Later, Ding and Tian \cite{ref2} employed the trace function over finite fields, authentication perpendicular arrays over finite fields, and perfect nonlinear functions over finite abelian groups to construct three classes of authentication codes with perfect secrecy, and they proved that these codes were optimal. In \cite{ref14}, Ding and Niederreiter utilized perfect and almost-perfect nonlinear functions to obtain four classes of authentication codes without secrecy. Based on preimage results for these functions, they gave upper bounds for both $P_I$ and $P_S$. Subsequently, they used functions with high nonlinearity \cite{ref15} to construct two families of authentication codes and provided upper bounds on $P_I$ and $P_S$, which improved upon the constructions in \cite{ref14}. More recently, Juan and Javier \cite{ref16} derived seven authentication codes with secrecy from functions over finite fields. They computed $P_I$ and, by means of preimage results for those functions, gave an upper bound on $P_S$, thereby generalizing, in a certain sense, the constructions of \cite{ref2}.

Authentication codes constructed by combinatorial designs and finite geometries are optimal, but they are difficult to implement. Whereas algebraically constructed authentication codes are easy to implement, yet it is difficult to compute $P_I$ and $P_S$ and to prove that they are optimal or asymptotically optimal. In this paper, motivated by \cite{ref3}, we present a new algebraic construction of authentication codes with secrecy. The codes have simple algebraic structures and are easy to implement. Based on the special Weil sum, we explicitly calculate $P_{I}$ and $P_{S}$ of these codes, and prove they are asymptotically optimal with respect to the information theoretic bound and the combinatorial bound of $P_I$ and $P_S$.

The paper is organized as follows. Section 2 presents certain bounds for authentication codes along with known results on exponential sums that are needed later. Section 3 gives a new algebraic construction of authentication codes with secrecy, explicitly calculates $P_{I}$ and $P_{S}$, and proves these codes are asymptotically optimal in the sense that both $P_I$ and $P_S$ asymptotically attain the information theoretic bound and the combinatorial bound for authentication codes. Section 4 concludes the whole paper.

\section{Preliminaries}
$ $
 In this section, we provide some bounds for authentication codes and some related known results about exponential sums, which will be needed in the sequel.

\subsection{Bounds on authentication codes}

First, we introduce some bounds on authentication codes. To this end, we use $\mathscr{S} ,\mathscr{M}$ and $\mathscr{E}$ to denote the random variables of the source states, messages and encoding rules. Let $\mathscr{M}^{r} $ to denote the random variables of the first $r$ messages, where $r$ is a positive integer, and $H\left (\mathscr{E}|\mathscr{M}^{r} \right )$ to denote the conditional entropy of $\mathscr{E}$ given that the first $r$ messages have been observed. The following lemma is called the information-theoretic bound of authentication codes.
\begin{lemma}\cite{ref2}\label{lem8}
Let the notations be defined as before, then for any authentication code, we have
\[ P_{I} \geq 2^{H(\mathscr{E}|\mathscr{M})-H(\mathscr{E} )}, \qquad P_{S} \geq 2^{H(\mathscr{E}|\mathscr{M}^{2})-H( \mathscr{E} |\mathscr{M})},\]
where the entropies are measured in bits.
\end{lemma}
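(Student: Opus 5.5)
We follow the classical Simmons argument: write each success probability as an expectation over the observed data, apply Jensen's inequality to the convex function $x\mapsto 2^{x}$ (equivalently, to the concave logarithm), and recognize the resulting exponent as a difference of entropies.

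\textbf{The bound on $P_I$.} For a message $m\in\mathscr{M}$, let $\chi(m,E)=1$ if $m$ is valid under the encoding rule $E$ (that is, $m\in E(\mathscr{S})$), and $\chi(m,E)=0$ otherwise. Define $P(m\ \mathrm{valid})=\sum_{E}p(E)\chi(m,E)$. Then $P_I=\max_m P(m\ \mathrm{valid})$, so $P_I\ge\sum_m p(m)P(m\ \mathrm{valid})$ for any distribution $p(m)$; we take the one induced on $\mathscr{M}$ by the code. For $p(m)>0$ the posterior is $p(E|m)=p(E)p(m|E)/p(m)$, and $p(E|m)>0$ forces $\chi(m,E)=1$. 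Hence
\[
P(m\ \mathrm{valid})\ \ge \sum_{E:\,p(E|m)>0}p(E)=\sum_{E}p(E|m)\,\frac{p(E)}{p(E|m)}.
\]
Jensen's inequality then gives
\[
\log_2 P(m\ \mathrm{valid})\ge\sum_E p(E|m)\log_2\frac{p(E)}{p(E|m)}.
\]
Averaging over $m$ with weights $p(m)$ and applying Jensen once more to pass from $\sum_m p(m)\log_2$ to $\log_2\sum_m p(m)$, we obtain
\[
\log_2 P_I\ \ge\ \sum_{m,E}p(m,E)\bigl[\log_2 p(E)-\log_2 p(E|m)\bigr]=H(\mathscr{E}|\mathscr{M})-H(\mathscr{E}).
\]

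\textbf{The bound on $P_S$.} The argument is the same, conditioned on an observed message $m_1$. Given $m_1$ and a candidate substitute $m\ne m_1$, the probability that $m$ is accepted is
\[
\sum_E p(E|m_1)\chi(m,E)\ \ge\ \sum_{E}p(E|m_1,m)\,\frac{p(E|m_1)}{p(E|m_1,m)},
\]
where $m_1,m$ are treated as the first two messages. The opponent's optimal success probability given $m_1$ is at least the average of this quantity over $m$ with respect to $p(m|m_1)$. Applying Jensen twice and then averaging over $m_1$ with a third application of Jensen, we obtain
\[
\log_2 P_S\ \ge\ \sum p(m_1,m,E)\log_2\frac{p(E|m_1)}{p(E|m_1,m)}=H(\mathscr{E}|\mathscr{M}^2)-H(\mathscr{E}|\mathscr{M}).
\]

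\textbf{Main obstacle.} The delicate points are the lower bound of the validity probability by a sum over the support of the posterior, and the handling of the $P_S$ case. Here one must use the fact that, without splitting, the second message is distinct from the first. The conditioning must also be organized so that the weights $p(m_1,m,E)$ are genuine probabilities, with the second source state drawn from the remaining states. Once this bookkeeping is set up correctly, each inequality is a single application of Jensen's inequality, and the identification of the exponents with the stated conditional entropies is routine. This is the standard proof found in \cite{ref2}.
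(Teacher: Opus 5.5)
Your proposal is correct, and it is the standard Simmons-type information-theoretic proof of these bounds: bound the acceptance probability from below by a posterior-weighted sum over the support of the posterior, then use concavity of $\log_2$ (this works equally for the paper's worst-case definitions of $P_I$ and $P_S$, since a maximum dominates an average). The paper gives no proof of its own and simply quotes the lemma from \cite{ref2}; the one imprecision in your write-up is that the second message differs from the first because the two source states are distinct and each encoding rule is injective (uniquely decodable), not because there is no splitting.
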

The following lemma is called the combinatorial bound of authentication codes.
\begin{lemma}\cite{ref2}\label{lem9}
Let the notations be defined as before, then for any authentication code without splitting, we have
\[P_{I} \geq \frac{\left | \mathscr{S}  \right | }{\left | \mathscr{M} \right | } \quad\text{and}\quad  P_{S} \geq \frac{\left | \mathscr{S}  \right |-1 }{\left | \mathscr{M} \right |-1 }.\]
If both equalities are achieved, then $\left | \mathscr{S} \right |\geq \left |\mathscr{M} \right |$.
\end{lemma}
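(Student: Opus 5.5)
The plan is a direct averaging (counting) argument over the uniform key distribution. Throughout I use that there is no splitting, so each $E_k$ is an injection $\mathscr{S}\to\mathscr{M}$, and that keys and source states are uniform.

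\textbf{The impersonation bound.} For a message $m$, let $\mathrm{Pr}[m \text{ valid}]$ be the fraction of keys $k$ with $m\in E_k(\mathscr{S})$. Each key validates exactly $|\mathscr{S}|$ messages, so
\[\sum_{m\in\mathscr{M}}\mathrm{Pr}[m\text{ valid}]=|\mathscr{S}|.\]
Hence some $m$ satisfies $\mathrm{Pr}[m\text{ valid}]\ge |\mathscr{S}|/|\mathscr{M}|$. Since $P_I$ is the maximum of these probabilities, $P_I\ge |\mathscr{S}|/|\mathscr{M}|$.

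\textbf{The substitution bound.} Fix an observed message $m$ occurring with positive probability, and condition on the keys $k$ with $m\in E_k(\mathscr{S})$. Every such key validates exactly $|\mathscr{S}|-1$ messages $m'\neq m$. Summing the conditional validity probabilities over the $|\mathscr{M}|-1$ candidates $m'\ne m$ therefore gives $|\mathscr{S}|-1$. So some $m'$ is accepted with conditional probability at least $(|\mathscr{S}|-1)/(|\mathscr{M}|-1)$. Maximising over $m'$, and then averaging or maximising over $m$, gives $P_S\ge(|\mathscr{S}|-1)/(|\mathscr{M}|-1)$.

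\textbf{Equality cases.}
\begin{itemize}
\item Equality in the $P_I$ bound forces every message to be valid for exactly a fraction $|\mathscr{S}|/|\mathscr{M}|$ of the keys.
\item Equality in the $P_S$ bound forces, for every valid $m$ and every $m'\ne m$, the number of keys validating both $m$ and $m'$ to be $\lambda=|\mathscr{E}|\,|\mathscr{S}|(|\mathscr{S}|-1)/(|\mathscr{M}|(|\mathscr{M}|-1))$.
\end{itemize}
So the sets $E_k(\mathscr{S})$ form a $2$-$(|\mathscr{M}|,|\mathscr{S}|,\lambda)$ design with $|\mathscr{E}|$ blocks, counted with multiplicity.

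\textbf{The final claim, as worded.} I cannot complete a proof of $|\mathscr{S}|\ge|\mathscr{M}|$, and I believe it is false as stated. Injectivity of each $E_k$ already gives $|\mathscr{M}|\ge|\mathscr{S}|$, so the claim would force $|\mathscr{S}|=|\mathscr{M}|$. A tiny counterexample rules this out. Take $\mathscr{S}=\{s\}$, $\mathscr{M}=\{m_1,m_2\}$, and two equiprobable keys with $E_1(s)=m_1$ and $E_2(s)=m_2$. Then $P_I=1/2=|\mathscr{S}|/|\mathscr{M}|$. An observed message reveals the key, and the other message is never valid for that key, so $P_S=0=(|\mathscr{S}|-1)/(|\mathscr{M}|-1)$. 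Both equalities hold, yet $|\mathscr{S}|=1<2=|\mathscr{M}|$. This is the step I expect to be the real obstacle, and in fact an impossibility.

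What the equality analysis does yield is the $2$-design structure above. In the usual formulation of this result, Fisher's inequality applied to that design gives a lower bound on the number of encoding rules in terms of $|\mathscr{M}|$, not on $|\mathscr{S}|$. I would therefore check the last sentence of the lemma against \cite{ref2} rather than try to force the stated inequality.
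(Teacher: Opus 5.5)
Your proposal is correct. The paper gives no proof of this lemma; it only quotes it from the literature. Your averaging arguments for $P_I\ge|\mathscr{S}|/|\mathscr{M}|$ and $P_S\ge(|\mathscr{S}|-1)/(|\mathscr{M}|-1)$ are the standard proof, and so is your equality analysis yielding a $2$-$(|\mathscr{M}|,|\mathscr{S}|,\lambda)$ design.

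You are also right that the last sentence is false as printed. Injectivity of each $E_k$ gives $|\mathscr{M}|\ge|\mathscr{S}|$, so the claim would force $|\mathscr{S}|=|\mathscr{M}|$. Every message would then be valid under every key, so $P_I=P_S=1$, and the claim would say only trivial codes meet both bounds. Your two-key example is a valid counterexample. A less degenerate one takes messages to be the points and keys the lines of a projective plane of order $q$, each line carrying a bijection from a $(q+1)$-set $\mathscr{S}$. This code attains both bounds, with $P_I=(q+1)/(q^2+q+1)$ and $P_S=1/(q+1)$, yet $|\mathscr{S}|=q+1<q^2+q+1=|\mathscr{M}|$.

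The intended conclusion is $|\mathscr{E}|\ge|\mathscr{M}|$, obtained from Fisher's inequality as you suspect. If you write that out, you must add the hypothesis $|\mathscr{S}|<|\mathscr{M}|$. Let $N$ be the message--key incidence matrix, $r$ the number of keys validating a given message, and $\lambda$ as you defined it. Then $NN^{T}=(r-\lambda)I+\lambda J$, and
\[r-\lambda=\frac{|\mathscr{E}||\mathscr{S}|\,(|\mathscr{M}|-|\mathscr{S}|)}{|\mathscr{M}|(|\mathscr{M}|-1)}.\]
This vanishes when $|\mathscr{S}|=|\mathscr{M}|$, and then the rank argument collapses. The hypothesis is genuinely needed: a single bijective key with $|\mathscr{S}|=|\mathscr{M}|\ge 2$ meets both bounds with $|\mathscr{E}|=1<|\mathscr{M}|$. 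For $|\mathscr{S}|<|\mathscr{M}|$ the argument goes through even with repeated blocks, and even when $\lambda=0$, since $r>0$ by the $P_I$ equality.
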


\subsection{Exponential sums}

In this subsection, we present some related results for exponential sums over finite fields. For more details, readers are referred to Chapter 5 in \cite{ref17}.

Let $p$ be a prime, and $q=p^h$, where $h$ is a positive integer. For each $b\in\mathbb{F}_{p^h}$, the function $\chi_b(x)=\zeta_p^{\textrm{Tr}_h(bx)}$ defines an additive character of $\mathbb{F}_{p^h}$,
where $\zeta_p=\textrm{e}^{\frac{2\pi\sqrt{-1}}{p}}$ is a primitive $p$-th root of unity and $x\in \mathbb{F}_{p^h}$. Specially, when $b=0$, $\chi_0(x)=1$ for any $x\in\mathbb{F}_{p^h}$, which is called the trivial additive character of $\mathbb{F}_{p^h}$. When $b=1$, the character $\chi_1$ is called the canonical additive character of $\mathbb{F}_{p^h}$, and every additive character of $\mathbb{F}_{p^h}$ can be written as $\chi_b(x)=\chi_1(bx)$. Let $g$ be a fixed primitive element of $\mathbb{F}_{p^h}^{\ast}$, the multiplicative character of $\mathbb{F}_{p^h}$ is defined by $\psi_{j}(g^{k})=\zeta_{p}^{\frac{jk}{p^h-1}}$($k=0,1,\ldots,p^h-2, 0\leq j\leq p^h-2$). In particular, $\psi_{\frac{p^h-1}{2}}$ is called the quadratic multiplication character of $\mathbb{F}_{p^h}$, denoted by $\eta_{h}$. The additive character and the quadratic multiplication character over $\mathbb{F}_{p^h}$ have the following properties,
$$\sum_{x \in \mathbb{F}_{p^h}}\zeta_{p}^{\textrm{Tr}_h(bx)}=\left\{\begin{array}{ll} p^h, &\quad b=0,\\
0, &\quad \text{otherwise};
\end{array}\right.$$
and for any $x \in \mathbb{F}_{p}^{*}$,
$$\eta_{h}(x)=\left\{\begin{array}{cc} 1, &\quad 2|h, \\
\eta_{1}(x), &\quad \text{otherwise};
\end{array}\right.$$
where $\eta_{1}(x)$ is the quadratic multiplication character of $\mathbb{F}_p$. The Gauss sum $G_{h}$ over $\mathbb{F}_{p^h}$ is defined by $G_{h}=\sum_{x\in\mathbb{F}_{p^h}}\eta_{h}(x)\chi_{1}(x)$, for the value of $G_{h}$, we have the following lemma.
\begin{lemma}\label{lem14}(\cite{ref17})
Let $p$ be an odd prime, and $h$ be a positive integer. Then
\[G_{h}=(-1)^{h-1}L^{h}p^{\frac{h}{2}},\]
where $L=(-1)^{\frac{(p-1)^2}{8}}$.
\end{lemma}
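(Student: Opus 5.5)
The statement is the classical evaluation of the quadratic Gauss sum. I would reduce it to the prime field and then lift to $\mathbb{F}_{p^h}$ with the Davenport--Hasse theorem. I read $L$ as the normalisation for which $G_1=L\sqrt{p}$: $L=1$ when $p\equiv1\pmod 4$ and $L=\sqrt{-1}$ when $p\equiv 3\pmod 4$, i.e.\ $L=(\sqrt{-1})^{((p-1)/2)^2}$. The displayed exponent $\frac{(p-1)^2}{8}$ with base $-1$ is always even for odd $p$, so I take the intended meaning to be this one, which is the standard form in Chapter 5 of \cite{ref17}. With that reading, the claim is equivalent to
\[
G_h=(-1)^{h-1}G_1^{\,h}.
\]

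\textbf{Base case $h=1$.} First I compute $|G_1|^2$ and $G_1^2$. Expanding $G_1\overline{G_1}=\sum_{x,y\neq0}\eta_1(xy^{-1})\zeta_p^{x-y}$, substituting $x=ty$ and using the orthogonality relation for additive characters recalled above gives $|G_1|^2=p$. Since $\overline{G_1}=\eta_1(-1)G_1$, this yields $G_1^2=\eta_1(-1)p=(-1)^{(p-1)/2}p$, so $G_1=\pm L\sqrt{p}$. The remaining step is to show the sign is $+$. This is Gauss's sign determination, and it is the genuine obstacle. I would either cite it from \cite{ref17} (Theorem 5.15 there) or prove it by the standard route: compute $\prod_{j}(\zeta_p^{2j-1}-\zeta_p^{-(2j-1)})$ in two ways, one via the polynomial identity for $\prod(1-\zeta_p^k)$ and one by expressing it through $G_1$ and comparing arguments.

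\textbf{Lifting to $\mathbb{F}_{p^h}$.} Two facts let $G_h$ be expressed through $G_1$:
\begin{itemize}
\item the canonical additive character of $\mathbb{F}_{p^h}$ is $\chi_1=\zeta_p^{\mathrm{Tr}_h(\cdot)}$, i.e.\ the character of $\mathbb{F}_p$ composed with the trace;
\item the quadratic character satisfies $\eta_h=\eta_1\circ N_{\mathbb{F}_{p^h}/\mathbb{F}_p}$, because the norm maps a generator $g$ to a generator $g^{(p^h-1)/(p-1)}$ of $\mathbb{F}_p^*$, and $\frac{p^h-1}{p-1}$ has the same parity as $h$.
\end{itemize}
The Davenport--Hasse lifting theorem then gives $-G_h=(-G_1)^h$, i.e.\ $G_h=(-1)^{h-1}G_1^h=(-1)^{h-1}L^hp^{h/2}$. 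If I want to avoid quoting Davenport--Hasse, I would prove it through the Stickelberger-type identity $1+\sum_{k\ge1}\bigl(\text{sum over monic polynomials of degree }k\bigr)T^k=(1-G_1T)\cdot(\text{trivial factor})$. Concretely, one shows that the character $f\mapsto \eta_1(\mathrm{Res})\zeta_p^{(\text{second coefficient})}$ on monic polynomials over $\mathbb{F}_p$ has an $L$-function equal to the linear polynomial $1+G_1T$. One then takes logarithmic derivatives to read off the sums over $\mathbb{F}_{p^h}$.

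\textbf{Main obstacle.} The only non-routine step is the sign of $G_1$; the modulus computation and the lifting are formal. I would therefore rely on the citation to \cite{ref17} for the sign and the lifting, and present the modulus computation and the two character compatibilities explicitly.
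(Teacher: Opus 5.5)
Your plan ($G_1^2=\eta_1(-1)p$ plus Gauss's sign determination for $h=1$, then Davenport--Hasse lifting via $\chi_1=\zeta_p^{\mathrm{Tr}_h(\cdot)}$ and $\eta_h=\eta_1\circ N$ to get $G_h=(-1)^{h-1}G_1^{h}$) is correct and is the standard textbook proof in \cite{ref17}, which is all the paper relies on here, since it simply cites the result without proof. One correction: $\frac{(p-1)^2}{8}$ is not always even. It is an even integer for $p\equiv 1\pmod 4$ but is $\equiv\frac{1}{2}\pmod 2$ for $p\equiv 3\pmod 4$, so with $(-1)^x=e^{\pi\sqrt{-1}\,x}$ the paper's $L$ is literally $1$ or $\sqrt{-1}$, and your reading is the literal one, not a repair of a typo.
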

For $f(x) \in \mathbb{F}_{p^{h}}[x]$, the Weil sum with $f(x)$ over $\mathbb{F}_{p^{h}}$ is defined by $\sum_{x \in \mathbb{F}_{p^{h}}}{\chi(f(x))}$. In particular, for a positive integer $u$, we define the special Weil sum by
\[S_{h,u}(a,b)=\sum_{x \in \mathbb{F}_{p^{h}}}{\chi(ax^{p^u+1}+bx)}\quad (a\in \mathbb{F}_{p^{h}}^\ast, b\in  \mathbb{F}_{p^{h}}).\]
When $b=0$, the value of $S_{h,u}(a,0)$ is determined by the following lemma.
\begin{lemma}(\cite{ref4})\label{lem2}
Let $u$ and $h$ be two positive integers, $a\in \mathbb{F}_{p^{h}}^{*}$, and denote $gcd(h,u)=v$. If $\frac{h}{v}$ is odd, then \[S_{h_,u}(a,0)=G_{h}\eta_{h}(a);\]
if $\frac{h}{v}$ is even, then
$$S_{h,u}(a,0)=\left\{\begin{array}{ll}p^s,
   & \text{if}\  \frac{s}{v} \ \text{is even and}\, a^{\frac{p^{h}-1}{p^v+1}}\neq(-1)^{\frac{s}{v}},\\
   -p^{s+v}, & \text{if}\  \frac{s}{v} \ \text{is even and}\, a^{\frac{p^{h}-1}{p^v+1}}=(-1)^{\frac{s}{v}}, \\
   -p^s, &\text{if}\ \frac{s}{v} \ \text{is odd and}\, a^{\frac{p^{h}-1}{p^v+1}}\neq(-1)^{\frac{s}{v}},\\
   p^{s+v}, &\text{if}\ \frac{s}{v} \ \text{is odd and}\, a^{\frac{p^{h}-1}{p^v+1}}=(-1)^{\frac{s}{v}},
   \end{array}\right.$$
where $\frac{h}{2}=s$.
\end{lemma}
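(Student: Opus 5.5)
The plan is the standard squaring-and-linearization argument for Weil sums of Dembowski–Ostrom type, followed by a separate determination of the sign. Throughout, $\chi=\chi_1$ and $d=p^u+1$.

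\textbf{Step 1: the modulus.} Expanding the square and substituting $x=y+z$ gives
\[|S_{h,u}(a,0)|^2=\sum_{x,y}\chi\bigl(a(x^{d}-y^{d})\bigr)=\sum_{z}\chi(az^{d})\sum_{y}\chi\bigl(a(y^{p^u}z+yz^{p^u})\bigr).\]
Using $\mathrm{Tr}_h(ay^{p^u}z)=\mathrm{Tr}_h\bigl((az)^{p^{-u}}y\bigr)$, the inner sum equals $p^h$ when $L_a(z):=az^{p^u}+(az)^{p^{-u}}=0$, and $0$ otherwise. Raising to the $p^u$-th power, the kernel $K_a$ of $L_a$ consists of the $z$ with $a^{p^u}z^{p^{2u}}+az=0$, and it is an $\mathbb{F}_{p^v}$-subspace. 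For $z\in K_a$ one checks $\mathrm{Tr}_h(az^{d})=\mathrm{Tr}_h(z\cdot az^{p^u})=-\mathrm{Tr}_h(az^{d})$, so this trace vanishes because $p$ is odd. Hence
\[|S_{h,u}(a,0)|^2=p^h|K_a|.\]

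\textbf{Step 2: the size of the kernel.} A nonzero $z\in K_a$ satisfies $z^{p^{2u}-1}=-a^{1-p^u}$. We have $\gcd(p^{2u}-1,p^h-1)=p^{\gcd(2u,h)}-1$.
\begin{itemize}
\item If $h/v$ is odd, then $\gcd(2u,h)=v$, so the nonzero solutions form a coset of $\mathbb{F}_{p^v}^*$. Writing $z=w\lambda$ with $\lambda\in\mathbb{F}_{p^v}^*$ and applying the same trace trick as in Step 1, any nonzero solution would force a contradiction. So $K_a=\{0\}$.
\item If $h/v$ is even, then $\gcd(2u,h)=2v$ and $|K_a|\in\{1,p^{2v}\}$. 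Translating solvability into a condition on $a$ shows that $|K_a|=p^{2v}$ exactly when $a^{(p^h-1)/(p^v+1)}=(-1)^{s/v}$. Here one uses that $\gcd(d,p^h-1)=p^v+1$ and that $-1$ is a $(p^v+1)$-th power precisely when $s/v$ is even.
\end{itemize}
Consequently $|S_{h,u}(a,0)|=p^{s}$ or $p^{s+v}$ according to this condition.

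\textbf{Step 3: the sign (the main obstacle).} Since $S_{h,u}(ac^{d},0)=S_{h,u}(a,0)$ for every $c\neq0$, the value of $S$ depends only on the coset of $a$ modulo $d$-th powers.
\begin{itemize}
\item If $h/v$ is odd, then $\gcd(d,p^h-1)=2$, so $S_{h,u}(a,0)=\alpha\,\eta_h(a)+\beta$. The relation $\sum_{a\neq0}S_{h,u}(a,0)=\sum_x\sum_{a\neq 0}\chi(ax^{d})=0$ gives $\beta=0$. Then
\[\sum_a\eta_h(a)S_{h,u}(a,0)=G_h\sum_{x\neq0}\eta_h(x^{d})=(p^h-1)G_h\]
gives $\alpha=G_h$, so $S_{h,u}(a,0)=G_h\eta_h(a)$.
\item If $h/v$ is even, first show $S$ is a rational integer. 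The Galois automorphisms $\zeta_p\mapsto\zeta_p^{c}$, $c\in\mathbb{F}_p^*$, send $S_{h,u}(a,0)$ to $S_{h,u}(ca,0)$. Every $c\in\mathbb{F}_p^*$ is a $(p^v+1)$-th power in $\mathbb{F}_{p^h}$ when $h/v$ is even, so $S$ is Galois-fixed and hence real, with value $\pm p^s$ or $\pm p^{s+v}$.
\end{itemize}
For the even case, let $A$ be the set of $a$ satisfying the condition of Step 2, so $|A|=(p^h-1)/(p^v+1)$. Assume $S$ takes a constant value $\epsilon_1p^{s+v}$ on $A$ and a constant value on the complement. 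Then the two identities
\[\sum_{a\neq0}S_{h,u}(a,0)=0\qquad\text{and}\qquad\sum_{a\neq0}S_{h,u}(a,0)^2=p^h\sum_{a\neq0}|K_a|\]
pin down the signs. The second identity is automatic from Step 1, so the real work is the linear relation.

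The delicate point is justifying the constancy on the complement, since it contains several cosets of $d$-th powers. I would first try to handle this by the linear relation together with integrality and the bound $|S|=p^s$: the only way to solve $\epsilon_1p^{s+v}|A|+\sum(\pm p^s)=0$ forces all signs on the complement to equal $-\epsilon_1$. Matching $\epsilon_1$ to $(-1)^{s/v+1}$ then requires evaluating one convenient $a$ directly, for instance by a subfield reduction to $\mathbb{F}_{p^{2v}}$ together with Lemma \ref{lem14}. I expect this sign bookkeeping to be the hardest part.
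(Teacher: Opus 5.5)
\paragraph{Verdict.} The paper does not prove this lemma; it is imported from \cite{ref4}, so your argument has to stand on its own. Most of it does, but there is a genuine gap: in the case $h/v$ even the absolute sign is never determined.

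\paragraph{What is correct.} Step~1 ($|S_{h,u}(a,0)|^2=p^h|K_a|$) is correct. So are the odd case of Step~3, the rationality argument, and your observation that $\sum_{a\neq0}S_{h,u}(a,0)=0$ forces every sign off $A$ to equal $-\epsilon_1$.

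\paragraph{The gap.} For $h/v$ even you end with $S_{h,u}(a,0)=\epsilon_1p^{s+v}$ on $A$ and $-\epsilon_1p^{s}$ off $A$, with $\epsilon_1$ undetermined. The claim $\epsilon_1=(-1)^{s/v+1}$ is exactly what the lemma asserts beyond the absolute value. The proposed ``subfield reduction to $\mathbb{F}_{p^{2v}}$'' is only a hope: there is no evident way to descend a sum over $\mathbb{F}_{p^h}$ to one over $\mathbb{F}_{p^{2v}}$ without something like Davenport--Hasse applied to Gauss sums of order dividing $p^v+1$. That is the semiprimitive Gauss-sum machinery your approach otherwise avoids.

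\paragraph{How to close it.} The twisted moment you already used in the odd case works here too. The identity
$\sum_{a\neq0}\eta_h(a)S_{h,u}(a,0)=(p^h-1)G_h$
holds for every $h$, because $d$ is even and hence $\eta_h(x^d)=1$. Since $p^v+1$ is even, $\eta_h$ is constant on cosets of $(p^v+1)$-th powers. It equals $(-1)^{(s/v)(p^v+1)/2}$ on $A$, and it sums to $0$ over the $p^v+1$ cosets. Inserting your values gives
$(p^h-1)(-1)^{(s/v)(p^v+1)/2}\epsilon_1p^{s}=(p^h-1)G_h$.
For $h=2s$ we have $G_h=-(-1)^{s(p-1)/2}p^{s}$, and $(p^v-1)/2\equiv v(p-1)/2\pmod 2$. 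Together these yield $\epsilon_1=(-1)^{s/v+1}$. If you cite Lemma~\ref{lem14} for $G_h$, note that its constant is misprinted; it should read $L=(\sqrt{-1})^{(p-1)^2/4}$.

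\paragraph{Smaller corrections.} In Step~2 you claim that $-1$ is a $(p^v+1)$-th power precisely when $s/v$ is even. This is false, and it contradicts your own later (correct) remark that every $c\in\mathbb{F}_p^\ast$ is a $(p^v+1)$-th power in $\mathbb{F}_{p^h}$. What the solvability criterion actually needs is:
\begin{itemize}
\item $(-1)^{(p^h-1)/(p^{2v}-1)}=(-1)^{s/v}$, and
\item $\frac{p^u-1}{p^v-1}\equiv 1\pmod{p^v+1}$, which holds because $u/v$ is odd.
\end{itemize}
Together these turn ``$-a^{1-p^u}$ is a $(p^{2v}-1)$-th power'' into $a^{(p^h-1)/(p^v+1)}=(-1)^{s/v}$.

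The odd-case ``contradiction'' in Step~2 is not actually given. A clean version: $\beta=az^{p^u+1}\neq0$ satisfies $\beta^{p^u}=-\beta$, and iterating $h/v$ times gives $\beta=-\beta$. That step is superfluous anyway, since your Step~3 evaluates $S_{h,u}(a,0)$ directly when $h/v$ is odd.
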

When $b\neq0$, the value of $S_{h,u}(a,b)$ depends on the existence of the solution to the equation
\begin{equation}\label{equal3}
a^{p^u}X^{p^{2u}}+aX=-b^{p^{u}}
\end{equation}
\begin{lemma}(\cite{ref4})\label{lem1}
Let $h,u$ and $v$ be defined as in Lemma \ref{lem2}. for $a\in \mathbb{F}_{p^{h}}^\ast$, $b\in  \mathbb{F}_{p^{h}}$, if $f(X)=a^{p^u}X^{p^{2u}}+aX$ is a permutation polynomial over $\mathbb{F}_{p^{h}}$, then
   $$S_{h,u}(a,b)=\left\{\begin{array}{ll}G_{h}\eta_{h}(a)\zeta_{p}^{\textrm{Tr}_{h}(-ax_{0}^{p^u+1})},
   & \text{if}\ \frac{h}{v} \ \text{is odd},\\
   (-1)^{\frac{h}{2v}}p^{\frac{h}{2}}\zeta_p^{\textrm{Tr}_{h}(-ax_{0}^{p^u+1})}, & \,\text{if}\ \frac{h}{v} \ \text{is even},
   \end{array}\right.$$
where $x_{0}\in \mathbb{F}_{p^{h}}$ is the unique solution of $f(X)=-b^{p^u}$.
\end{lemma}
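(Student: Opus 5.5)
The plan is to reduce the case $b\neq 0$ to the case $b=0$ by a shift of variable (``completing the square''), and then read off the value from Lemma \ref{lem2}. Since $f(X)=a^{p^u}X^{p^{2u}}+aX$ is a permutation polynomial, there is a unique $x_0\in\mathbb{F}_{p^h}$ with $a^{p^u}x_0^{p^{2u}}+ax_0=-b^{p^u}$. Applying the inverse Frobenius $p^{-u}$ to this relation gives
\[ax_0^{p^u}+a^{p^{-u}}x_0^{p^{-u}}=-b. \]
Substituting $x=y+x_0$ and expanding, I get
\[a(y+x_0)^{p^u+1}+b(y+x_0)=ay^{p^u+1}+ay^{p^u}x_0+ayx_0^{p^u}+by+ax_0^{p^u+1}+bx_0 .\]
Under $\textrm{Tr}_h$, Frobenius invariance gives $\textrm{Tr}_h(ax_0y^{p^u})=\textrm{Tr}_h(a^{p^{-u}}x_0^{p^{-u}}y)$. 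Hence the part linear in $y$ has trace $\textrm{Tr}_h\big(y(a^{p^{-u}}x_0^{p^{-u}}+ax_0^{p^u}+b)\big)=0$ by the displayed relation.

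For the constant term, I substitute the relation for $b$. This gives
\[\textrm{Tr}_h(bx_0)=-\textrm{Tr}_h(ax_0^{p^u+1})-\textrm{Tr}_h\big((ax_0^{p^u+1})^{p^{-u}}\big)=-2\,\textrm{Tr}_h(ax_0^{p^u+1}).\]
So the constant contributes $\zeta_p^{\textrm{Tr}_h(-ax_0^{p^u+1})}$, and
\[S_{h,u}(a,b)=\zeta_p^{\textrm{Tr}_h(-ax_0^{p^u+1})}\,S_{h,u}(a,0).\]
When $h/v$ is odd, Lemma \ref{lem2} gives $S_{h,u}(a,0)=G_h\eta_h(a)$ directly, which is the first case of the statement.

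When $h/v$ is even, I must show that the permutation hypothesis forces the ``generic'' branch of Lemma \ref{lem2}, namely $S_{h,u}(a,0)=(-1)^{s/v}p^{s}$. This is the main obstacle, and I would avoid translating the permutation criterion into the power condition on $a$. Instead I would compute $|S_{h,u}(a,0)|^2$. Writing it as a double sum over $x$ and $y$ and substituting $x\mapsto x+y$ gives
\[|S_{h,u}(a,0)|^2=\sum_{x}\chi(ax^{p^u+1})\sum_{y}\zeta_p^{\textrm{Tr}_h\left(y\,(ax^{p^u}+a^{p^{-u}}x^{p^{-u}})\right)}=p^h\cdot\#\{x: ax^{p^u}+a^{p^{-u}}x^{p^{-u}}=0\}.\]
The last equality also uses that $ax^{p^u+1}$ has trace $0$ on the kernel, by the same Frobenius argument as above. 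Raising the kernel equation to the $p^u$-th power turns it into $f(x)=0$. Since $f$ is a permutation, the kernel is $\{0\}$, so $|S_{h,u}(a,0)|^2=p^h$.

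This excludes the branches of Lemma \ref{lem2} with absolute value $p^{s+v}$. The remaining branches give $p^s$ if $s/v$ is even and $-p^s$ if $s/v$ is odd, i.e. $(-1)^{h/(2v)}p^{h/2}$. Multiplying by the phase factor from the shift yields the second case of the statement.
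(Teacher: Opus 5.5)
Your proof is correct. The paper gives no proof of this lemma; it is imported from \cite{ref4}. So the useful comparison is with the route suggested by the paper's own preliminaries.

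Your first step is the natural reduction and is carried out correctly. The shift $x\mapsto y+x_0$, together with the Frobenius-invariance of $\textrm{Tr}_h$, gives $S_{h,u}(a,b)=\zeta_p^{\textrm{Tr}_h(-ax_0^{p^u+1})}S_{h,u}(a,0)$.

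Where you differ is in choosing the branch of Lemma \ref{lem2} when $\frac{h}{v}$ is even. With the tools in the paper, one would invoke Lemma \ref{lem4} and Remark \ref{remark1}. These translate the permutation hypothesis into $a^{\frac{p^h-1}{p^v+1}}\neq(-1)^{\frac{h}{2v}}$, and then $(-1)^{\frac{h}{2v}}p^{\frac{h}{2}}$ is read off directly. You instead compute $|S_{h,u}(a,0)|^2=p^h\cdot\#\{x: f(x)=0\}=p^h$ and use the modulus to exclude the $\pm p^{\frac{h}{2}+v}$ branches.

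Your route costs a short extra computation, but it is independent of the power-residue criterion in Lemma \ref{lem4}. It also exposes the mechanism behind that criterion: for odd $p$, your identity shows the $\pm p^{\frac{h}{2}+v}$ branches occur exactly when $f$ has a nontrivial kernel. The criterion route, given Remark \ref{remark1}, is a one-line lookup.

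One small remark: you do not need the vanishing of $\textrm{Tr}_h(ax^{p^u+1})$ on the kernel, which is also where $p$ odd enters. Under your hypothesis the kernel is $\{0\}$, so the remaining sum is just $\chi(0)=1$.
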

\begin{lemma}\label{lem4}
(\cite{ref4})
For $a\in \mathbb{F}_{p^{h}}$, the equation $$a^{p^u}X^{p^{2u}}+aX=0$$ is solvable in $\mathbb{F}_{p^{h}}^*$ if and only if both $\frac{h}{v}$ is even and $a^{\frac{p^{h}-1}{p^v+1}}=(-1)^{\frac{h}{2v}}$. When the condition is satisfied, the equation has $p^{2v}-1$ non-zero solutions.
\end{lemma}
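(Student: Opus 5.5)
The plan is to turn the equation into a condition on the single quantity $z=ax^{p^u+1}$. Throughout, $p$ is odd, as in Lemma \ref{lem14}.

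\emph{Reduction.} For $x\in\mathbb{F}_{p^h}^*$, multiplying $a^{p^u}x^{p^{2u}}+ax=0$ by $x^{p^u}\neq 0$ gives the equivalent equation
\[a^{p^u}x^{p^{2u}+p^u}+ax^{p^u+1}=0 .\]
Setting $z=ax^{p^u+1}$, this reads $z^{p^u}+z=0$. Hence a nonzero $x$ solves the original equation if and only if $z=ax^{p^u+1}$ is a nonzero root of $Z^{p^u}=-Z$ in $\mathbb{F}_{p^h}$. So I first determine the set $T$ of such roots, and then count the $x$ with $ax^{p^u+1}\in T$.

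\emph{Step 1: the set $T$.} Applying the relation twice gives $z^{p^{2u}}=z$, so $z\in\mathbb{F}_{p^{\gcd(2u,h)}}$.
\begin{itemize}
\item If $h/v$ is odd, then $\gcd(2u,h)=v$. Since $v\mid u$, we get $z^{p^u}=z$, hence $2z=0$ and $z=0$. Thus $T=\emptyset$ and there is no nonzero solution.
\item If $h/v$ is even, then $u/v$ is odd (because $\gcd(u/v,h/v)=1$), so $\gcd(2u,h)=2v$ and $z\in\mathbb{F}_{p^{2v}}$. On $\mathbb{F}_{p^{2v}}$ we have $z^{p^u}=z^{p^v}$, since $u\equiv v \pmod{2v}$. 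The condition becomes $z^{p^v-1}=-1$.
\end{itemize}
In the second case $T=\omega\,\mathbb{F}_{p^v}^*$, where $\omega$ is any fixed element with $\omega^{p^v-1}=-1$. Such $\omega$ exists in $\mathbb{F}_{p^{2v}}^*$: take a generator $\gamma$ of this cyclic group and $\omega=\gamma^{(p^v+1)/2}$. Hence $|T|=p^v-1$.

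\emph{Step 2: counting preimages (the main obstacle).} The key point is computing $\gcd(p^u+1,p^h-1)$. With $u/v$ odd and $h/v$ even, I claim it equals $p^v+1$. I would prove this via the standard identity
\[\gcd(p^m+1,p^n-1)=\begin{cases}p^{\gcd(m,n)}+1, & n/\gcd(m,n)\ \text{even},\\ 2, & \text{otherwise}.\end{cases}\]
It consequently follows that the image of $x\mapsto x^{p^u+1}$ on $\mathbb{F}_{p^h}^*$ is the subgroup $D$ of index $p^v+1$, each element of $D$ having exactly $p^v+1$ preimages. Moreover, $w\in D$ if and only if $w^{(p^h-1)/(p^v+1)}=1$. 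So $x$ is a solution if and only if $a^{-1}z\in D$ for some $z\in T$, i.e.
\[a^{\frac{p^h-1}{p^v+1}}=z^{\frac{p^h-1}{p^v+1}}.\]

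\emph{Step 3: evaluating the right-hand side.} Write $z=\omega t$ with $t\in\mathbb{F}_{p^v}^*$.
\begin{itemize}
\item Since $\frac{p^h-1}{p^v+1}=(p^v-1)N$ with $N=\frac{p^h-1}{p^{2v}-1}$, we get $t^{(p^h-1)/(p^v+1)}=1$.
\item Also $\omega^{(p^h-1)/(p^v+1)}=(\omega^{p^v-1})^N=(-1)^N$.
\item Finally, $N=\sum_{i=0}^{h/(2v)-1}p^{2vi}\equiv \frac{h}{2v}\pmod 2$.
\end{itemize}
Thus the condition reads $a^{\frac{p^h-1}{p^v+1}}=(-1)^{h/(2v)}$. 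It does not depend on $z$.

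\emph{Conclusion.} If the condition holds, every $z\in T$ is attained, each by $p^v+1$ values of $x$. This gives $(p^v-1)(p^v+1)=p^{2v}-1$ nonzero solutions. If it fails, or if $h/v$ is odd, there is none. This is exactly the statement of Lemma \ref{lem4}.
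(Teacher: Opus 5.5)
The paper gives no proof of this lemma; it quotes it from \cite{ref4}, so there is no in-text argument to compare with. Your argument is correct and self-contained. The substitution $z=ax^{p^u+1}$ turns the problem into $z^{p^u}=-z$. Step 1 correctly identifies the nonzero roots as $\omega\,\mathbb{F}_{p^v}^*$, using $\gcd(2u,h)\in\{v,2v\}$ and $u\equiv v \pmod{2v}$ when $u/v$ is odd. Steps 2--3 then reduce solvability to a single power-residue test, and the sign comes out of $N\equiv h/(2v)\pmod 2$. Compared with simply citing the result, your route explains where the exponent $\frac{p^h-1}{p^v+1}$ and the sign $(-1)^{h/(2v)}$ come from. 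It also describes the nonzero roots explicitly as the $x$ with $ax^{p^u+1}\in\omega\,\mathbb{F}_{p^v}^*$, which makes the count $(p^v-1)(p^v+1)$ immediate.

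There are two small points.

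First, you need $a\neq 0$. You divide by $a$ in Step 2, and your claim that $z$ is nonzero relies on it. For $a=0$ every $x$ is a root, so the lemma as printed must be read with $a\in\mathbb{F}_{p^h}^*$, as in the neighbouring lemmas; say so explicitly.

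Second, the gcd identity is the only ingredient you import, and in your situation it is a two-line check. Set $g:=\gcd(p^u+1,p^h-1)$.
\begin{itemize}
\item Since $u/v$ is odd, $p^v+1\mid p^u+1$. Since $2v\mid h$, $p^v+1\mid p^{2v}-1\mid p^h-1$. Hence $p^v+1\mid g$.
\item Conversely, $g\mid\gcd(p^{2u}-1,p^h-1)=p^{\gcd(2u,h)}-1=p^{2v}-1$, so $g/(p^v+1)$ divides $p^v-1$.
\item Also $g/(p^v+1)$ divides $(p^u+1)/(p^v+1)=\sum_{i=0}^{u/v-1}(-p^v)^i\equiv 1\pmod{p^v-1}$.
\end{itemize}
Together these force $g=p^v+1$.
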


\begin{rem}\label{remark1}
By lemma \ref{lem4}, it is easy to see that $f(X)=a^{p^u}X^{p^{2u}}+aX$ is a permutation polynomial over $\mathbb{F}_{p^{h}}$ if and only if $\frac{h}{v}$ is odd or both $\frac{h}{v}$ is even and $a^{\frac{p^{h}-1}{p^v+1}}\neq(-1)^{\frac{h}{2v}}$.
\end{rem}

When $f(X)=a^{p^u}X^{p^{2u}}+aX$ is not a permutation polynomial over $\mathbb{F}_{p^{h}}$, the value of $S_{h,u}(a,b)$ is determined by the following lemma.
\begin{lemma}(\cite{ref4})\label{lem5}
Let $h,u$ and $v$ be defined as in Lemma \ref{lem2}, for $a\in \mathbb{F}_{p^{h}}^\ast$, $b\in \mathbb{F}_{p^{h}}$, if $f(X)=a^{p^u}X^{p^{2u}}+aX$ is not a permutation polynomial over $\mathbb{F}_{p^{h}}$, then
$$S_{h,u}(a,b)=\left\{\begin{array}{ll}
   -(-1)^{\frac{h}{2v}}p^{\frac{h}{2}+v}\zeta_p^{\textrm{Tr}_{h}(-ax_{0}^{p^u+1})}, & \text{if}\, f(X)=-b^{p^u} \,\text{is solvable},\\
  0,& \text{otherwise},
   \end{array}\right.$$
where $x_{0}$ is the solution to the equation $f(X)=-b^{p^u}$.
\end{lemma}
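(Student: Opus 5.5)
The plan is to reduce everything to $S_{h,u}(a,0)$, whose value is known from Lemma \ref{lem2}, by two devices: computing $|S_{h,u}(a,b)|^2$ to decide when the sum vanishes, and then shifting the variable by a solution $x_0$ to obtain the exact value. Throughout, $\chi=\chi_1$ is the canonical additive character, and I use repeatedly that $\textrm{Tr}_h(y^{p^u}c)=\textrm{Tr}_h(y\,c^{p^{-u}})$.

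\emph{Step 1 (vanishing criterion).} Write $|S_{h,u}(a,b)|^2=\sum_{x,y}\chi\big(a(x^{p^u+1}-y^{p^u+1})+b(x-y)\big)$ and substitute $x=y+z$. This gives
\[
a\big((y+z)^{p^u+1}-y^{p^u+1}\big)=a\big(y^{p^u}z+yz^{p^u}+z^{p^u+1}\big).
\]
The part depending on $y$ has trace $\textrm{Tr}_h\big(y((az)^{p^{-u}}+az^{p^u})\big)$. Summing over $y$ therefore gives $p^h$ if $(az)^{p^{-u}}+az^{p^u}=0$ and $0$ otherwise. Raising the condition to the $p^u$-th power shows it is exactly $f(z)=0$. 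Let $K=\ker f$. By Lemma \ref{lem4}, in the non-permutation case we have $|K|=p^{2v}$. Hence
\[
|S_{h,u}(a,b)|^2=p^h\sum_{z\in K}\chi(az^{p^u+1}+bz).
\]

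\emph{Step 2 (the quadratic term dies on $K$).} For $z,w\in K$, the cross terms of $a(z+w)^{p^u+1}$ have trace $\textrm{Tr}_h(w\,f(z)^{p^{-u}})=0$. Taking $w=z$ shows $2\,\textrm{Tr}_h(az^{p^u+1})=0$, so $\textrm{Tr}_h(az^{p^u+1})=0$ on $K$ because $p$ is odd. Consequently $|S_{h,u}(a,b)|^2=p^h\sum_{z\in K}\chi(bz)$, which equals $p^{h+2v}$ if $\textrm{Tr}_h(bz)=0$ for all $z\in K$, and $0$ otherwise.

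The remaining point is to identify the condition ``$\textrm{Tr}_h(bK)=0$'' with the solvability of $f(X)=-b^{p^u}$. The map $f$ is $\mathbb{F}_p$-linear. Under the trace form it is self-adjoint up to the twist $y\mapsto y^{p^u}$: indeed $\textrm{Tr}_h(f(x)\,y^{p^{-u}})$ is symmetric in $x,y$ after applying Frobenius. Hence the image of $f$ is the annihilator of $K^{p^{u}}$, which translates precisely into: $-b^{p^u}\in\mathrm{Im} f$ iff $\textrm{Tr}_h(bz)=0$ for all $z\in K$. In particular $S_{h,u}(a,b)=0$ when the equation is unsolvable. I expect this duality bookkeeping with the Frobenius twists to be the fiddliest step, though it is elementary linear algebra over $\mathbb{F}_p$.

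\emph{Step 3 (exact value).} If $f(x_0)=-b^{p^u}$, substitute $x=y+x_0$ in $S_{h,u}(a,b)$. The terms linear in $y$ have trace $\textrm{Tr}_h\big(y((ax_0)^{p^{-u}}+ax_0^{p^u}+b)\big)$, which is $0$ because the bracket is $f(x_0)^{p^{-u}}+b=0$. Pairing the same identity with $x_0$ gives $\textrm{Tr}_h(bx_0)=-2\,\textrm{Tr}_h(ax_0^{p^u+1})$. So the constant term $ax_0^{p^u+1}+bx_0$ has trace $-\textrm{Tr}_h(ax_0^{p^u+1})$, and
\[
S_{h,u}(a,b)=\zeta_p^{\textrm{Tr}_h(-ax_0^{p^u+1})}S_{h,u}(a,0).
\]
By Remark \ref{remark1}, the non-permutation case is exactly $h/v$ even and $a^{\frac{p^h-1}{p^v+1}}=(-1)^{h/(2v)}$. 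For such $a$, Lemma \ref{lem2} gives $S_{h,u}(a,0)=-(-1)^{s/v}p^{s+v}$ with $s=h/2$, which yields the stated formula. The value is also independent of the choice of $x_0$, since both $x_0$ and $x_0+z$ with $z\in K$ lead to the same sum.
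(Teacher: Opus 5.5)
The paper gives no proof of this lemma; it is quoted from \cite{ref4}, so there is no in-paper argument to match. Your proof is correct and self-contained given Lemmas \ref{lem2} and \ref{lem4}. The following all check out:
\begin{itemize}
\item the differencing in Step 1 and the count $|K|=p^{2v}$;
\item the vanishing of $\textrm{Tr}_h(az^{p^u+1})$ on $K=\ker f$ in Step 2;
\item the shift by $x_0$ in Step 3, together with $\textrm{Tr}_h(bx_0)=-2\,\textrm{Tr}_h(ax_0^{p^u+1})$;
\item the final reading of Lemma \ref{lem2} under the condition of Remark \ref{remark1}, which gives $S_{h,u}(a,0)=-(-1)^{h/(2v)}p^{h/2+v}$.
\end{itemize}
There is one small slip in the duality step. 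The form that is symmetric in $x,y$ is
\[
\textrm{Tr}_h(f(x)\,y^{p^u})=\textrm{Tr}_h(ax^{p^u}y)+\textrm{Tr}_h(axy^{p^u})=\textrm{Tr}_h(x^{p^u}f(y)),
\]
not $\textrm{Tr}_h(f(x)\,y^{p^{-u}})$. This identity gives $(\mathrm{Im} f)^{\perp}=K^{p^u}$, which is exactly the conclusion you then use, so nothing downstream changes.

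Your route to the vanishing does more work than necessary, so it is worth comparing with shorter alternatives. For every $b$ in the set $B$ of Lemma \ref{lem6}, Step 3 already gives
\[
|S_{h,u}(a,b)|^2=|S_{h,u}(a,0)|^2=p^{h+2v}.
\]
By rank--nullity, $\#B=|\mathrm{Im} f|=p^{h-2v}$. Parseval's identity $\sum_{b}|S_{h,u}(a,b)|^2=p^{2h}$ then forces $S_{h,u}(a,b)=0$ for all $b\notin B$. This needs neither Step 2 nor the adjoint computation.

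Even more directly, you can run your Step 1 substitution on $S_{h,u}(a,b)\overline{S_{h,u}(a,0)}$ instead of $|S_{h,u}(a,b)|^2$. This gives
\[
S_{h,u}(a,b)\overline{S_{h,u}(a,0)}=p^h\sum_{z:\,f(z)=-b^{p^u}}\chi(az^{p^u+1}+bz).
\]
The right side is visibly $0$ when the equation is unsolvable, and $S_{h,u}(a,0)\neq 0$ by Lemma \ref{lem2}.

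What your version buys in exchange is a $b$-by-$b$ evaluation of $|S_{h,u}(a,b)|^2$ and the explicit description $\mathrm{Im} f=(K^{p^u})^{\perp}$. From that description, Lemma \ref{lem6} follows as a byproduct rather than being an extra input.
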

Particularly, the number of $b$ such that the equation $f(X)=-b^{p^u}$ is solvable over $\mathbb{F}_{p^{h}}$ is determined by the following lemma.
\begin{lemma}\label{lem6}
(\cite{ref4})
Let $h,u$ and $v$ be defined as in Lemma \ref{lem2}, for $a\in \mathbb{F}_{p^{h}}^\ast$, denote $B=\{b\in \mathbb{F}_{p^{h}} | \,a^{p^u}X^{p^{2u}}+aX=-b^{p^u}\ \text{is solvable in}\ \mathbb{F}_{p^{h}}\}$, if $a^{\frac{p^{h}-1}{p^v+1}}=(-1)^{\frac{h}{2v}}$, then $\#B=p^{h-2v}$.
\end{lemma}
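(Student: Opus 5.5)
The plan is to reduce the count of $B$ to the size of the image of a linear map and then apply rank--nullity, using Lemma \ref{lem4} for the kernel. Throughout, the hypothesis is read as implicitly requiring $\frac{h}{v}$ to be even (the exponent $\frac{h}{2v}$ in $(-1)^{\frac{h}{2v}}$ presupposes this), which is exactly the setting of Lemma \ref{lem4}.

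First I will check that $f(X)=a^{p^u}X^{p^{2u}}+aX$ is an $\mathbb{F}_p$-linear map of $\mathbb{F}_{p^h}$ to itself. It is additive because $X\mapsto X^{p^{2u}}$ is a power of the Frobenius automorphism. It commutes with scalars $\lambda\in\mathbb{F}_p$ since $\lambda^{p^{2u}}=\lambda$. (In fact it is even $\mathbb{F}_{p^v}$-linear, because $v\mid u$ gives $\lambda^{p^{2u}}=\lambda$ for $\lambda\in\mathbb{F}_{p^v}$, but $\mathbb{F}_p$-linearity suffices.)

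Next I will compute the kernel. Under the hypothesis $a^{\frac{p^h-1}{p^v+1}}=(-1)^{\frac{h}{2v}}$ with $\frac{h}{v}$ even, Lemma \ref{lem4} says $f(X)=0$ has exactly $p^{2v}-1$ nonzero solutions. Together with $X=0$, this gives $|\ker f|=p^{2v}$. Rank--nullity over $\mathbb{F}_p$ (or simply counting cosets of the additive subgroup $\ker f$) then gives
\[
|\mathrm{Im} f| = \frac{p^h}{p^{2v}} = p^{h-2v}.
\]

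Finally, I will relate $B$ to the image. By definition, $b\in B$ if and only if $-b^{p^u}\in \mathrm{Im} f$. The map $\sigma: b\mapsto -b^{p^u}$ is a bijection of $\mathbb{F}_{p^h}$, being the composition of a Frobenius power with negation. Hence $B=\sigma^{-1}(\mathrm{Im} f)$, and so
\[
\#B=|\mathrm{Im} f|=p^{h-2v}.
\]

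There is no serious obstacle. The only points needing care are that the exact kernel count in Lemma \ref{lem4} (counting nonzero solutions only) must be supplemented by $X=0$, and that one must note that $\frac{h}{v}$ is even under the hypothesis so that Lemma \ref{lem4} applies.
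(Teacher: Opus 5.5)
The paper gives no proof of this lemma to compare against: it is quoted from \cite{ref4} without argument. Your derivation is correct and complete, given Lemma \ref{lem4}.
\begin{itemize}
\item The map $f(X)=a^{p^u}X^{p^{2u}}+aX$ is an additive, indeed $\mathbb{F}_p$-linear, endomorphism of $\mathbb{F}_{p^h}$.
\item Lemma \ref{lem4} gives $p^{2v}-1$ nonzero kernel elements. Adding $X=0$ yields $|\ker f|=p^{2v}$, hence $|\mathrm{Im}\, f|=p^{h-2v}$.
\item The map $b\mapsto -b^{p^u}$ is a bijection of $\mathbb{F}_{p^h}$ carrying $B$ onto $\mathrm{Im}\, f$, so $\#B=|\mathrm{Im}\, f|$.
\end{itemize}

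Your remark that the hypothesis silently forces $\frac{h}{v}$ to be even is also right, and it is needed to invoke Lemma \ref{lem4}. For odd $p$, $p^h\equiv(-1)^{h/v}\pmod{p^v+1}$, so $p^v+1\mid p^h-1$ exactly when $2v\mid h$. Otherwise neither the exponent $\frac{p^h-1}{p^v+1}$ nor $(-1)^{\frac{h}{2v}}$ makes sense.

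What your argument buys is that it reduces the lemma to an immediate corollary of Lemma \ref{lem4}. The only substantive input is the kernel count, which is also only cited in the paper. If you wanted it self-contained, the nonzero kernel elements satisfy $X^{p^{2u}-1}=-a^{1-p^u}$. The hypothesis on $a$ makes this solvable, and the number of solutions is then $\gcd(p^{2u}-1,p^h-1)=p^{\gcd(2u,h)}-1=p^{2v}-1$, because $\gcd(2u,h)=2v$ when $\frac{h}{v}$ is even.
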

The character sum of any quadratic polynomial is given by the following lemma.
\begin{lemma}(\cite{ref17}, Theorem 5.33)\label{lem12}
Let $q=p^h$ be odd and $f(x)=a_{2}x^2+a_{1}x+a_{0}\in \mathbb{F}_{q}[x]$ with $a_{2}\neq 0$. Then\\
$$\sum_{x\in \mathbb{F}_{q}}{\chi(f(x))}=\chi(a_{0}-a_{1}^2(4a_{2})^{-1})\eta_h(a_{2})G_h.$$
\end{lemma}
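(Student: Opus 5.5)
The plan is to reduce the general quadratic sum to a Gauss sum in three moves: complete the square, count square roots with the quadratic character, and rescale. Throughout, $\chi$ denotes the canonical additive character $\chi_1$ of $\mathbb{F}_q$, as in the definition of $G_h$. Since $q$ is odd, both $2$ and $a_2$ are invertible in $\mathbb{F}_q$.

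\textbf{Completing the square.} We write
\[
a_2x^2+a_1x+a_0=a_2\left(x+\tfrac{a_1}{2a_2}\right)^2+a_0-a_1^2(4a_2)^{-1}.
\]
Since $\chi$ is a homomorphism from $(\mathbb{F}_q,+)$, the constant term factors out. The translation $y=x+a_1(2a_2)^{-1}$ is a bijection of $\mathbb{F}_q$. Hence
\[
\sum_{x\in\mathbb{F}_q}\chi(f(x))=\chi\!\left(a_0-a_1^2(4a_2)^{-1}\right)\sum_{y\in\mathbb{F}_q}\chi(a_2y^2).
\]
It therefore remains to show that $\sum_{y}\chi(a_2y^2)=\eta_h(a_2)G_h$.

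\textbf{Counting square roots.} For each $z\in\mathbb{F}_q$, the number of $y$ with $y^2=z$ equals $1+\eta_h(z)$, using the convention $\eta_h(0)=0$. This holds because $q$ is odd, so a nonzero square has exactly two square roots, a non-square has none, and $0$ has one. Therefore
\[
\sum_{y}\chi(a_2y^2)=\sum_{z}\bigl(1+\eta_h(z)\bigr)\chi(a_2z)=\sum_z\chi(a_2z)+\sum_z\eta_h(z)\chi(a_2z).
\]
The first sum vanishes by the orthogonality relation recalled above, since $a_2\neq0$.

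\textbf{Rescaling.} In the second sum we substitute $w=a_2z$. Then $\eta_h(z)=\eta_h(w)\eta_h(a_2^{-1})=\eta_h(w)\eta_h(a_2)$, where the last step uses that $\eta_h$ takes values $\pm1$ on $\mathbb{F}_q^*$. This gives $\eta_h(a_2)\sum_w\eta_h(w)\chi(w)=\eta_h(a_2)G_h$, which completes the argument.

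No step is a real obstacle. The only points needing care are the convention $\eta_h(0)=0$ in the root-counting identity and the use of oddness of $q$ to invert $2$ and to get exactly two square roots of each nonzero square.
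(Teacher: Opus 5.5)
Your proof is correct, and it is essentially the standard argument behind the result the paper cites without proof (Theorem 5.33 of Lidl--Niederreiter). You complete the square, then evaluate $\sum_y \chi(a_2y^2)$ by counting square roots via $1+\eta_h(z)$ and rescaling, which gives $\eta_h(a_2)G_h$. Your reading of $\chi$ as the canonical character $\chi_1$ is the right one, because $G_h$ is defined with respect to $\chi_1$; for a general $\chi_b$ an extra factor $\eta_h(b)$ would appear.
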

\section{Main research result}

For two positive integers $n(n\geq 4)$ and $r$, let $Tr(x)$ be the absolute trace function from $ \mathbb{F} _{p^{n} } $ to $\mathbb{F} _{p}$, where $p$ is an odd prime,  we use $\mathscr{S}$, $\mathscr{K}$, $\mathscr{M}$, and $\mathscr{E}$ to denote the source state space, key space, message space, and encoding rule space, respectively. Define the authentication code with secrecy as
\[
    \left ( \mathscr{S} ,\mathscr{K},\mathscr{M},\mathscr{E}  \right ) =\left ( \mathbb{F} _{p^{n} } ,\mathbb{F} _{p^{n} }, \mathbb{F} _{p^{n} }\times \mathbb{F} _{p},\left \{ E_{k}|k\in \mathscr{K}  \right \}   \right ),
\]
where
\begin{align}\label{equal5}
    E_{k} \left ( s \right ) =\left ( s+  k^{p^{r} },Tr\left ( sk \right )   \right )
\end{align}
for any $k\in \mathscr{K} $ and $s\in \mathscr{S}$.  Let $m_{1} = s+  k^{p^{r} }$ and $m_{2} =Tr\left ( sk \right )$, the first part $m_1$ is the encrypted message, and the second part $m_2$ is the redundant part for authentication. Next we will calculate $P_I$ and $P_S$ of the authentication code in (\ref{equal5}).

\subsection{Impersonation attack}

Assuming that attacker knows the overall system structure, but is unaware of the secret key $k$ or the corresponding encoding rule $E_{k}$. This section will discuss the security of the authentication code in (\ref{equal5}) against impersonation attack. In such an attack, the adversary randomly or selectively chooses an authentication code $m=\left ( m_{1},m_{2}   \right ) \in\mathscr{M}$ and sends it to the receiver. Upon receipt, the receiver computes $s=m_{1} -k^{p^{r}  } $ and $Tr\left ( sk \right ) $, and then verifies the validity of the equation $Tr\left ( sk \right ) =m_{2} $. Thus, we have
\begin{align}\label{euqal6}
    \begin{split}
    P_{I} &=\max_{m_{1}\in\mathbb{F}_{p^n},m_{2}\in\mathbb{F}_p} \frac{\left | \left \{  k\in \mathbb{F} _{p^{n} } : Tr\left ( m_{1}k- k^{p^{r}+1 } \right ) =m_{2}   \right \}  \right | }{p^{n} } \\
   &=\max_{a\in \mathbb{F} _{p^{n} } ,b\in \mathbb{F} _{p} } \frac{\left | \left \{  x\in \mathbb{F} _{p^{n} } : Tr\left ( ax- x^{p^{r}+1 } \right )=b     \right \}  \right | }{p^{n} } .
\end{split}
\end{align}

For convenience, we use $\zeta_p $ to denote a complex $p$-th root of unity, use $\eta $ and $\eta ^{\prime}$ to denote the quadratic character over $\mathbb{F} _{p^{n} } $ and the quadratic character over $\mathbb{F} _{p}$, respectively. We denote $\chi_{1} $ and $\chi_{1} ^{\prime} $ be the canonical additive character over $\mathbb{F} _{p^{n} }$ and the additive canonical character over $\mathbb{F} _{p} $, respectively. In order to calculate $P_I$, we need the following lemma.

\begin{lemma}\label{lem10}
Let $n$ and $r$ be two positive integers and $v=gcd(n,r)$, for $ a \in \mathbb{F}_{p^{n}}, b \in \mathbb{F}_{p}$, denote
\[ N(a,b)=|\left \{ x\in \mathbb{F} _{p^{n} }:Tr(ax-x^{p^{r}+  1 } )=b\right \}|,\]
then we have the following result.

(1) If n is odd, then
    \[
   N(a,b)=\begin{cases}
   p^{n-1},& \text{ if } a= 0,b=0\ or\ a\ne  0,\lambda_a = b,\\
  p^{n-1}+ p^{-1}G_{n}G_1 \eta^{\prime } \left ( b \right ),& \text{ if } a= 0,b\ne 0,\\
  p^{n-1} + p^{-1} G_{n}G_1\eta ^{\prime }(b-\lambda_a ), & \text{ if }a\ne  0,\lambda_a \ne b;
\end{cases}
    \]

(2) if n is even and $\frac{n}{v} $ is odd, then
    \[
   N(a,b)=\begin{cases}
  p^{n-1}+ p^{-1}\left ( p-1 \right ) G_{n},  & \text{ if } a= 0,b=0 \ or \ a\ne  0,\lambda_a = b, \\
  p^{n-1}-p^{-1}G_{n},     & \text{ if }a= 0, b\ne0\ or\ a\ne  0,\lambda_a \ne b;  \\
\end{cases}
    \]

(3) if $\frac{n}{v} $ is even and $\frac{n}{2v} $ is odd, then
 \[
    N(a,b)=\begin{cases}
    p^{n-1}-(p-1) p^{\frac{n}{2}-1 },   & \text{ if } a=0,b= 0 \ or \ a\ne0,\lambda_a= b,\\
    p^{n-1}+ p^{\frac{n}{2}-1 },   & \text{ if } a=0,b\ne 0 \ or \ a\ne0,\lambda_a\ne b;
\end{cases}
\]

(4) if $\frac{n}{v} $ is even and $\frac{n}{2v} $ is even, then
\[
   N(a,b)=\begin{cases}
   p^{n-1}-(p-1)p^{\frac{n}{2}+  v-1},  & \text{ if }a=0,b= 0 \ or \ a\ne 0,\lambda_a = b, (\ref{equal1})\ is\ solvable, \\
   p^{n-1} +p^{\frac{n}{2} +  v-1},     & \text{ if }a=0,b\ne 0 \ or \ a\ne 0,\lambda_a \ne b, (\ref{equal1})\ is\ solvable,  \\
   p^{n-1}, & \text{ if } a\ne 0,(\ref{equal1})\ is\ unsolvable,
\end{cases}
\]
where $\gamma_a$ is a solution of $X^{p^{2r} }+ X=- a^{p^{r} }$ and $\lambda_a=Tr\left ( \gamma_a^{p^{r}+  1 }  \right )$.
\end{lemma}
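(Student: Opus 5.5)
We count solutions with additive characters. Orthogonality over $\mathbb{F}_p$ gives
\[
N(a,b)=\frac1p\sum_{y\in\mathbb{F}_p}\sum_{x\in\mathbb{F}_{p^n}}\zeta_p^{y(\mathrm{Tr}(ax-x^{p^r+1})-b)}
=p^{n-1}+\frac1p\sum_{y\in\mathbb{F}_p^*}\zeta_p^{-yb}\,S_{n,r}(-y,ya),
\]
since $y\,\mathrm{Tr}(\cdot)=\mathrm{Tr}(y\,\cdot)$ for $y\in\mathbb{F}_p$. So everything reduces to evaluating the special Weil sum $S_{n,r}(-y,ya)$ for $y\in\mathbb{F}_p^*$ with Lemmas \ref{lem2}, \ref{lem1} and \ref{lem5}, and then summing over $y$.

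\textbf{Step 1: the case $a=0$.} Use Lemma \ref{lem2} with coefficient $-y\in\mathbb{F}_p^*$.
\begin{itemize}
\item If $n/v$ is odd, then $S=G_n\eta(-y)$.
\item If $n/v$ is even, the test quantity is $(-y)^{(p^n-1)/(p^v+1)}$. Since $p^v+1$ divides $p^n-1$ and $(p^n-1)/(p^v+1)$ is a multiple of $p-1$, this quantity equals $1$.
\item Consequently $S=-(-1)^{n/(2v)}p^{n/2}$ if $n/(2v)$ is odd, and $S=-p^{n/2+v}$ if $n/(2v)$ is even. One should double-check the parity claim in the lemma's exponent here.
\end{itemize}
Now sum over $y$. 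For $n/v$ odd and $n$ odd, we have $\eta(-y)=\eta'(-y)$, and $\sum_y\eta'(-y)\zeta_p^{-yb}=\eta'(b)G_1$; this gives case (1). For $n$ even, $\eta(-y)=1$, and $\sum_{y\ne0}\zeta_p^{-yb}$ equals $p-1$ or $-1$; this gives case (2). Cases (3) and (4) follow the same way, using the constant value of $S$.

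\textbf{Step 2: the case $a\ne0$.} For $S_{n,r}(-y,ya)$ the relevant equation (\ref{equal3}) is $-yX^{p^{2r}}-yX=-(ya)^{p^r}=-y a^{p^r}$, i.e.\ $X^{p^{2r}}+X=-a^{p^r}$. It is independent of $y$, and its solution $\gamma_a$ serves for every $y$. Its solvability is governed by Remark \ref{remark1} and Lemma \ref{lem4} with coefficient $1$. The permutation condition fails exactly when $n/v$ is even and $1=(-1)^{n/(2v)}$, i.e.\ in case (4). The phase is $\zeta_p^{\mathrm{Tr}(y\gamma_a^{p^r+1})}=\zeta_p^{y\lambda_a}$.

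Apply Lemma \ref{lem1} in cases (1)--(3) and Lemma \ref{lem5} in case (4). The factor in front of the phase is the same as the $a=0$ value; in case (4) it is $-p^{n/2+v}$. In the unsolvable subcase of (4) the sum is $0$, so $N=p^{n-1}$. Hence
\[
N(a,b)=p^{n-1}+\frac1p\sum_{y\ne0}c(y)\,\zeta_p^{y(\lambda_a-b)},
\]
where $c(y)$ is the $a=0$ coefficient. This is the $a=0$ computation with $b$ replaced by $b-\lambda_a$, which produces exactly the listed cases, including $\eta'(b-\lambda_a)$ in case (1).

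\textbf{Main obstacle.} The delicate part is the sign bookkeeping: checking that Lemma \ref{lem1} and Lemma \ref{lem5}, applied with coefficient $-y$ (the lemma's $a$), produce precisely the phase $y\lambda_a$. This requires tracking that $-(-y)\gamma_a^{p^r+1}=y\gamma_a^{p^r+1}$. It also requires verifying that $\eta(-y)$ and the power $(-y)^{(p^n-1)/(p^v+1)}$ behave as claimed for $y\in\mathbb{F}_p^*$. I would verify these first. I would also check that in case (1), with $n$ odd, $n/v$ is automatically odd, and combine $G_n\eta'(-1)$ with the Gauss-sum identity over $\mathbb{F}_p$ to get the stated $G_nG_1\eta'(b)$ form.
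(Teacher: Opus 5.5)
Your route is the same as the paper's: orthogonality gives $p^{n-1}+p^{-1}\sum_{y\in\mathbb{F}_p^*}\zeta_p^{-yb}S_{n,r}(-y,ya)$, Lemma~\ref{lem2} handles $a=0$, and Lemmas~\ref{lem1} and \ref{lem5} handle $a\neq0$ with phase $\zeta_p^{y\lambda_a}$. The $a\neq0$ case is then the $a=0$ sum with $b$ replaced by $b-\lambda_a$. Your justification that $(-y)^{(p^n-1)/(p^v+1)}=1$, via $(p^n-1)/(p^v+1)=(p^v-1)(p^n-1)/(p^{2v}-1)$, fills a step the paper leaves implicit.

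There is, however, a sign error at the spot you flagged. For $n/v$ even and $n/(2v)$ odd you wrote $S_{n,r}(-y,0)=-(-1)^{n/(2v)}p^{n/2}$, which equals $+p^{n/2}$. Lemma~\ref{lem2} actually gives $-p^{n/2}$: here $s/v=n/(2v)$ is odd and the test value $1$ differs from $(-1)^{s/v}=-1$. This also agrees with Lemma~\ref{lem1}'s $(-1)^{n/(2v)}p^{n/2}$ at $b=0$, which your Step~2 claim ``the factor is the same as the $a=0$ value'' relies on. With your value, case (3) would come out as $p^{n-1}+(p-1)p^{n/2-1}$ and $p^{n-1}-p^{n/2-1}$, the opposite of the statement. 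With $-p^{n/2}$ it matches.

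A second slip is harmless. Dividing $-yX^{p^{2r}}-yX=-ya^{p^r}$ by $-y$ gives $X^{p^{2r}}+X=a^{p^r}$, not $-a^{p^r}$, so the relevant root is $x_0=-\gamma_a$, as in the paper. Since $p^r+1$ is even, $x_0^{p^r+1}=\gamma_a^{p^r+1}$, so neither the phase $\zeta_p^{y\lambda_a}$ nor the solvability criterion changes.
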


\begin{proof}
By the orthogonal property of additive character, we have
\[
    \begin{split}
        N\left ( a,b \right ) &=p^{-1}\sum_{x\in \mathbb{F} _{p^{n} }   }\sum_{y\in \mathbb{F} _{p}  }\zeta _{p}^{ y\left ( Tr\left (ax -x^{p^{r}+  1 }  \right ) -b \right )    } \\
        &=p^{n-1}+  p^{-1} \sum_{x\in \mathbb{F} _{p^{n} } }\sum_{y\in \mathbb{F} _{p}^{*} }\zeta _{p} ^{ y\left ( Tr\left ( ax-x^{p^{r}+  1 }  \right ) -b \right )    } \\
        &=p^{n-1}+  p^{-1} \sum_{x\in \mathbb{F} _{p^{n} } }\zeta _{p} ^{\left (ayx -yx^{p^{r}+  1 }  \right )  }\sum_{y\in \mathbb{F} _{p}^{*} } \zeta_p^{ -yb }  \\
        &=p^{n-1}+ p^{-1} \sum_{y\in \mathbb{F} _{p}^{*} }\chi _{-b}^{\prime}\left ( y \right )S_{n,r} \left ( -y,ay \right )
    \end{split}
\]
When $a=0$, by Lemma \ref{lem2}, if $\frac{n}{v} $ is odd, then $S_{n,r} \left ( -y,0 \right ) =G_{n}  \eta  \left ( -y \right ) $. So,
\[
    \begin{split}
      N(a,b) &=p^{n-1} + p^{-1} \sum_{y\in \mathbb{F} _{p}^{*} }\chi _{-b}^{\prime} \left ( y \right )G_{n}  \eta  \left ( -y \right )\\
       &=\begin{cases}
 p^{n-1}+p^{-1}\sum_{y\in \mathbb{F} _{p}^{*} }G_{n} \eta  \left ( -y \right ),    & \text{ if } b=0,\\
 p^{n-1}+p^{-1}\sum_{y\in \mathbb{F} _{p}^{*} }\chi _{-b}^{\prime }(y)G_{n}\eta (-y),       & \text{ if } b\ne 0,
\end{cases}\\
&=\begin{cases}
  p^{n-1}+ p^{-1}\left ( p-1 \right )G_{n},   & \text{ if } b=0,n\,is\,even,\\
  p^{n-1},& \text{ if } b=0,n\,is\,odd,\\
  p^{n-1}-p^{-1}G_{n},      & \text{ if } b\ne0,n\,is\,even,  \\
  p^{n-1}+p^{-1}G_{n}G_1  \eta^{\prime } \left (  b \right ),& \text{ if } b\ne 0, n\,is\,odd.
\end{cases}
    \end{split}
\]
If $\frac{n}{v} $ is even, by Lemma \ref{lem2}, then
\[
   S_{n,r}\left ( -y,0 \right )  =\begin{cases}
  -p^{\frac{n}{2} +  v}, & \text{ if } \frac{n}{2v}\ is\ even, \\
  -p^{\frac{n}{2} }, & \text{ if } \frac{n }{2v}\ is\ odd.
\end{cases}
\]
So, we have
\[
    \begin{split}
         N(a,b) &=\begin{cases}
   p^{n-1} - p^{\frac{n}{2}  +  v-1}\sum_{y\in \mathbb{F} _{p}^{*} }\chi _{-b}^{\prime}(y),   & \text{ if }\frac{n }{2v}\ is\ even , \\
  p^{n-1}-p^{\frac{n}{2} -1}   \sum_{y\in \mathbb{F} _{p}^{*} }\chi _{-b}^{\prime}(y),  & \text{ if } \frac{n }{2v}\ is\ odd,
\end{cases}\\
&=\begin{cases}
   p^{n-1} -(p-1)p^{\frac{n}{2} +  v-1} ,    & \text{ if }\frac{n }{2v}\ is\ even,b= 0,\\
   p^{n-1} +p^{\frac{n}{2} +  v-1},     & \text{ if }\frac{n }{2v}\ is\ even,b\ne 0,  \\
   p^{n-1}-(p-1) p^{\frac{n}{2} -1},   & \text{ if } \frac{n }{2v}\ is\ odd,b= 0,\\
   p^{n-1}+ p^{\frac{n}{2} -1},   & \text{ if } \frac{n}{2v}\ is\ odd,b\ne 0.
\end{cases}
    \end{split}
\]
When $a\ne0$, if $\frac{n}{v}\equiv 0(\mod4)$, the equation
\begin{align}\label{equal1}
X^{p^{2r} }+  X=- a^{p^{r} }
\end{align}
is not always solvable over $\mathbb{F}_{p^{n}}$. When $X^{p^{2r}}+X$ is a permutation polynomial over $\mathbb{F}_{p^{n}}$, it has a unique solution. Thus, let $\gamma_a$ be some solution of (\ref{equal1}) if exists. For $z_{1},z_{2}\in \mathbb{F}_{p}^{*}$, the equation
\begin{align}\label{equal2}
z_{1}^{p^{2r} }  X^{p^{2r} }+  z_{1} X=- (z_{2}a )^{p^{r} }
\end{align}
always has a solution $z_{3}\gamma _{a}$, where $z_{3}=z_{1}^{-1}z_{2}\in \mathbb{F}_{p}^{*}$. By (\ref{equal1})-(\ref{equal2}), for any $y\in \mathbb{F}_{p}^{*}$, the equation
\[(-y)^{p^{2r} }  X^{p^{2r} }+  (-y) X=- (ya )^{p^{r} }\]
has a solution $x_{0}= -\gamma _{a}$.
Thus, if $\frac{n}{v} $ is odd, by lemma \ref{lem1}, $S_{n,r} \left ( -y,ya \right ) =G_{n}  \eta  \left ( -y \right )\zeta_p ^{Tr \left ( y\gamma_a^{p^{r}+  1 }  \right ) }$, denote $\lambda_a=Tr\left ( \gamma_a^{p^{r}+  1 } \right )$, we have
\[
   \begin{split}
        N(a,b) &=p^{n-1} +  p^{-1}\textstyle \sum_{y\in \mathbb{F} _{p}^{*} }\zeta_p^{-yb} G_{n}  \eta  \left ( -y \right )\zeta_p^{y\lambda_a}\\
&=p^{n-1}+ p^{-1} G_{n}   \textstyle \sum_{y\in \mathbb{F} _{p}^{*} }\eta (-y)\zeta_p^{y(\lambda_a-b)}   \\
&=\begin{cases}
  p^{n-1}+ p^{-1}G_{n}   \textstyle \sum_{y\in \mathbb{F} _{p}^{*} }\zeta_p^{y(\lambda_a-b)}, & \text{ if } n \ is \ even, \\
  p^{n-1}+ p^{-1}G_{n}   \textstyle \sum_{y\in \mathbb{F} _{p}^{*} }\eta^{\prime } (-y) \zeta_p^{y(\lambda_a-b)},& \text{ if } n\ is\ odd,\\
\end{cases}\\
&=\begin{cases}
  p^{n-1}+p^{-1}(p-1)G_{n},  & \text{ if }n\ is\ even,\lambda_a = b, \\
  p^{n-1}-p^{-1}G_{n}, & \text{ if } n\ is\ even,\lambda_a \ne  b,  \\
  p^{n-1},& \text{ if } n\ is\ odd,\lambda_a = b ,\\
  p^{n-1}+p^{-1}G_{n}  G_1\eta^{\prime }  (b-\lambda_a ),& \text{ if } n\ is\ odd,\lambda_a \ne  b.
\end{cases}
    \end{split}
\]
If $\frac{n}{v} $ is even, $f\left ( X \right ) =\left ( -y \right ) ^{p^{r} } X^{p^{2r} } -yX$ is not always a permutation polynomial, by Lemma \ref{lem1} and Lemma \ref{lem5}, we have
\[
    \begin{split}
        N\left (a, b \right )&=p^{n-1}+ p^{-1} \sum_{y\in \mathbb{F} _{p}^{*} }\chi _{-b}^{\prime}\left ( y \right )S_{n,r} \left ( -y,ay \right )\\
        &=\begin{cases}
  p^{n-1} -p^{-1}\sum_{y\in \mathbb{F} _{p}^{*} } \left ( -1\right )  ^{\frac{n}{2v} } p^{\frac{n}{2}+  v } \zeta_p^{y(\lambda_a-b) },  & \text{ if } \frac{n}{2v} is \ even,(\ref{equal1})\ is\  solvable , \\
p^{n-1}, & \text{ if }\frac{n}{2v} is \ even, (\ref{equal1})\ is\  unsolvable,\\
  p^{n-1} + p^{-1} \sum_{y\in \mathbb{F} _{p}^{*} }  \left ( -1\right )  ^{\frac{n}{2v} } p^{\frac{n}{2} } \zeta_p^{y(\lambda_a-b) }, & \text{ if } \frac{n}{2v} is \ odd,
\end{cases} \\
&= \begin{cases}
p^{n-1}-(p-1)  p^{\frac{n}{2}+  v -1},  & \text{ if }\frac{n}{2v} is \ even,\lambda_a = b, (\ref{equal1})\ is\  solvable,\\
  p^{n-1}+  p^{\frac{n}{2}+  v -1},  & \text{ if }\frac{n}{2v} is \ even,\lambda_a \ne b, (\ref{equal1})\ is\  solvable,\\
  p^{n-1}, & \text{ if } \frac{n}{2v} is \ even, (\ref{equal1})\ is\  unsolvable,\\
  p^{n-1} -(p-1)  p^{\frac{n}{2}-1 },& \text{ if }\frac{n}{2v} is \ odd, \lambda_a =b,\\
 p^{n-1} +  p^{\frac{n}{2}-1 },& \text{ if } \frac{n}{2v} is \ odd,\lambda_a \ne b.
\end{cases}
    \end{split}
\]
This completes the proof of Lemma \ref{lem10}.
\end{proof}
Based on Lemma \ref{lem10}, the $P_{I}$ of the authentication code in (\ref{equal5}) can be obtained.
\begin{theorem}\label{thm1}
Let $n$ and $r$ be two positive integers and $v=gcd(n,r)$, for the authentication code in (\ref{equal5}), we have the following,
\[
P_{I} =\begin{cases}
 \frac{1}{p}+  \frac{1}{p^{\frac{n+1}{2}  } }, & \text{ if } n\ is\ odd ,\\
\frac{1}{p}+  \frac{p-1}{p^{ \frac{n}{2}+1 } },  & \text{ if } n\ is\ even\ and\ \frac{n}{v}\ is\ odd ,   \\
 \frac{1}{p}+  \frac{1}{p^{\frac{n}{2}+1 }  }, &  \text{ if } \frac{n}{v}\ is\ even\ and\ \frac{n}{2v}\ is\ odd ,   \\
  \frac{1}{p}+   \frac{1 }{p^{\frac{n}{2}-v+1} }, &  \text{ if }  \frac{n}{v}\ is\ even\ and\ \frac{n}{2v}\ is\ even.
\end{cases}
\]
\end{theorem}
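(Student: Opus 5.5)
Starting from (\ref{euqal6}), $P_I=\max_{a,b}N(a,b)/p^n$. So we will read off from Lemma \ref{lem10} the largest value of $N(a,b)$ in each of the four cases and divide by $p^n$. In each case we must also check that the maximizing pair $(a,b)$ exists: for instance $a=0$ with a suitable $b$, since that pair is always available.

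\emph{Case $n$ odd.} Here $\frac{n}{v}$ is automatically odd. By Lemma \ref{lem14}, $G_nG_1=(-1)^{n-1}L^{n}p^{n/2}\cdot L p^{1/2}=L^{n+1}p^{\frac{n+1}{2}}$. Since $n+1$ is even and $L=\pm1$, this equals $p^{\frac{n+1}{2}}$. Hence the nontrivial values of $N(a,b)$ are $p^{n-1}\pm p^{\frac{n-1}{2}}$, with sign $\eta'(b)$ or $\eta'(b-\lambda_a)$. Taking $a=0$ and $b$ a nonzero square in $\mathbb{F}_p$ gives the maximum $p^{n-1}+p^{\frac{n-1}{2}}$. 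Dividing by $p^n$ yields $\frac1p+p^{-\frac{n+1}{2}}$.

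\emph{Case $n$ even, $\frac nv$ odd.} Now $G_n=-L^np^{n/2}=-p^{n/2}$, because $L^n=1$. The two values in Lemma \ref{lem10}(2) are therefore $p^{n-1}-(p-1)p^{\frac n2-1}$ and $p^{n-1}+p^{\frac n2-1}$. Comparing them with the formula claimed in the statement is the one place where a sign needs care: the stated value $\frac1p+\frac{p-1}{p^{n/2+1}}$ corresponds to $G_n=+p^{n/2}$. So I would recheck the sign of $G_n$ for even $n$, and that of $\eta(-y)$ for $y\in\mathbb{F}_p^*$, which equals $1$ when $n$ is even. I would then take whichever of the two values is larger, realized by $a=0$ with $b=0$ or $b\neq0$ respectively. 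This sign bookkeeping is the main obstacle: it must be reconciled against the statement, and if Lemma \ref{lem14} gives $G_n=-p^{n/2}$, the maximum would instead be $p^{n-1}+p^{\frac n2-1}$.

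\emph{Case $\frac nv$ even, $\frac n{2v}$ odd.} The values in Lemma \ref{lem10}(3) are $p^{n-1}-(p-1)p^{\frac n2-1}$ and $p^{n-1}+p^{\frac n2-1}$. The maximum is the latter, attained at $a=0$, $b\neq0$. This gives $\frac1p+p^{-\frac n2-1}$.

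\emph{Case $\frac n{2v}$ even.} The values in Lemma \ref{lem10}(4) are $p^{n-1}-(p-1)p^{\frac n2+v-1}$, $p^{n-1}+p^{\frac n2+v-1}$ and $p^{n-1}$. The maximum is the second, attained at $a=0$, $b\ne0$, where the solvability condition is trivially met. Dividing by $p^n$ gives $\frac1p+p^{v-\frac n2-1}$. I would note that this exponent differs from the stated $p^{-(\frac n2-v+1)}$ only in notation, since $\frac n2+1-v=\frac n2-v+1$.
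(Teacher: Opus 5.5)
Your route is the paper's: read off the largest value of $N(a,b)$ from Lemma~\ref{lem10} and divide by $p^n$. Cases 3 and 4 are fine and agree with the paper, since Lemma~\ref{lem10}(3),(4) involve no Gauss sums. The weak point is the sign of the Gauss sums, and it is more than bookkeeping. Your claim $L=\pm1$ is false. When $p\equiv 3\pmod 4$, $(p-1)^2/8$ is a half-integer, and Lemma~\ref{lem14} must be read with $L=e^{\pi\sqrt{-1}(p-1)^2/8}=\sqrt{-1}$ (the standard evaluation, Theorem 5.15 in Lidl--Niederreiter). Equivalently, $G_n=(-1)^{n-1}G_1^{\,n}$ with $G_1^2=\eta'(-1)p$.

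In case 1 this gives $G_nG_1=(\eta'(-1)p)^{\frac{n+1}{2}}$, which is negative when $p\equiv 3\pmod 4$ and $n\equiv 1\pmod 4$. That slip is harmless, because $\eta'$ takes both values on $\mathbb{F}_p^*$. Pick $b\neq 0$ with $\eta'(b)$ equal to the sign of $G_nG_1$, not necessarily a square, and the maximum $p^{n-1}+p^{\frac{n-1}{2}}$ stands.

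In case 2 the sign decides the answer. For even $n$, $G_n=-\eta'(-1)^{n/2}p^{n/2}$, which is positive exactly when $p\equiv 3\pmod 4$ and $n\equiv 2\pmod 4$. Only then is the maximum $p^{n-1}+(p-1)p^{\frac n2-1}$, giving $P_I=\frac1p+\frac{p-1}{p^{n/2+1}}$. In every other instance of case 2 the maximum is $p^{n-1}+p^{\frac n2-1}$ and $P_I=\frac1p+\frac1{p^{n/2+1}}$.

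For example, take $p=5$, $n=6$, $r=2$. Then $G_6=-125$, $N(0,0)=3025$ and $N(0,b)=3150$ for $b\neq0$; indeed $3025+4\cdot 3150=5^6$. By Lemma~\ref{lem10}(2), no $a\neq0$ does better. So $P_I=\frac15+\frac1{5^4}$ rather than $\frac15+\frac4{5^4}$.

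So the second line of the statement is not provable as written, and the paper's proof does not bridge this either. It lists $\frac1p\pm\frac{p-1}{p^{n/2+1}}$ and $\frac1p\mp\frac1{p^{n/2+1}}$ and simply takes the upper sign, which tacitly assumes $G_n>0$. Your suspicion was therefore justified. However, your tentative conclusion that $G_n=-p^{n/2}$ always is too strong: for $p=3$, $n=6$, $r=2$ one has $G_6=27$, and the stated value is correct. To finish your argument, carry the sign of $G_n$ through and state case 2 as the dichotomy above.
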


\begin{proof}
By (\ref{euqal6}), we know
\[P_{I} =\max_{a\in \mathbb{F}_{p^{n} },b\in \mathbb{F}_{p}    }\frac{N(a,b)}{p^{n} }.\]
By Lemma \ref{lem10},
if $n$ is odd, we have
\[
   \frac{N(a,b)}{p^{n} } =\begin{cases}
   \frac{1}{p}, & \text{ if }  a=0,b=0 \ or \ a\ne0,\lambda_a =b,\\
  \frac{1}{p}\pm \frac{1}{p^{\frac{n+1}{2} } } ,   & \text{ if } a=0,b\ne0 \ or \ a\ne 0,\lambda_a \ne b;
\end{cases}
\]
if n is even and $\frac{n}{v} $ is odd, we have
\[
     \frac{N(a,b)}{p^{n} } =\begin{cases}
  \frac{1}{p}\pm \frac{p-1}{p^{ \frac{n}{2}+1 } },   & \text{ if }a=0,b= 0\ or\ a\ne0,\lambda_a = b,\\
  \frac{1}{p}\mp \frac{1}{p^{\frac{n}{2}+1 } } ,  & \text{ if }a=0,b\ne 0\ or\ a\ne0,\lambda_a \ne b;
\end{cases}
\]
if $\frac{n}{v} $ is even and $\frac{n}{2v}$ is odd, we have
\[
     \frac{N(a,b)}{p^{n} }= \begin{cases}
     \frac{1}{p}  - \frac{p-1}{p^{\frac{n}{2}+1  } },  & \text{ if }a=0,b= 0\ or\ a\ne 0, \lambda_a = 0, \\
   \frac{1}{p}  + \frac{1}{p^{ \frac{n}{2}+1} } , & \text{ if }a=0,b\ne 0\ or\ a\ne 0, \lambda_a \ne 0.
\end{cases}
\]
if $\frac{n}{v} $ is even and $\frac{n}{2v}$ is even, we have
\[
    \begin{split}
       \frac{N(a,b)}{p^{n} }= \begin{cases}
   \frac{1}{p}  -\frac{p-1 }{p^{\frac{n}{2}-v+  1} },    & \text{ if }a=0,b= 0\ or\ a\ne 0,\lambda_a = b, (\ref{equal1})\ is\  solvable,  \\
   \frac{1}{p}  +  \frac{1 }{p^{\frac{n}{2}-v+  1} } ,   & \text{ if }a=0,b\ne 0\ or\ a\ne 0,\lambda_a \ne b, (\ref{equal1})\ is\  solvable , \\
   \frac{1}{p},& \text{ if }a\ne 0, (\ref{equal1})\ is\  unsolvable.
\end{cases}
    \end{split}
\]
It follows that
\[ P_{I} =\begin{cases}
 \frac{1}{p}+  \frac{1}{p^{\frac{n+1}{2}  } } ,& \text{ if } n\ is\ odd ,\\
\frac{1}{p}+  \frac{p-1}{p^{  \frac{n}{2}+1 } } , &  \text{ if } n\ is\ even\ and\ \frac{n}{v}\ is\ odd ,   \\
 \frac{1}{p}+  \frac{1}{p^{\frac{n}{2}+1 }  }, &  \text{ if }  \frac{n}{v}\ is\ even\ and\ \frac{n}{2v}\ is\ odd,   \\
 \frac{1}{p}+   \frac{1 }{p^{\frac{n}{2}-v+1} } , & \text{ if } \frac{n}{v}\ is\ even\ and\ \frac{n}{2v}\ is\ even .
\end{cases} \]
 This completes the proof of Theorem \ref{thm1}
\end{proof}
\subsection{Substitution attack}

Suppose an opponent has observed one message $m =\left ( m_{1} =s+  k^{p^{r} } ,m_{2}=Tr(sk)  \right ) $. Now, he wants to replace $m$ with another message $m^{\prime} =\left ( m_{1}^{\prime} ,m_{2}^{\prime} \right ) $, where $m_{1} ^{\prime} \ne m_{1} $. Let $\sigma_{1}  =m_{1} ^{\prime}- m_{1} $ and $\sigma_{2}  =m_{2} ^{\prime}- m_{2} $, then substituting $m$ with $m^{\prime}$ is equivalent to adding an element $\sigma_{1}$ to $m_{1}$,  and add an element $\sigma_{2}$ to $m_{2}$. This is successful if and only if $Tr\left ( sk \right )+  \sigma _{2}  =Tr\left ( \left ( s+  \sigma _{1}  \right ) k \right ) $, which is equivalent to $Tr\left ( \sigma _{1}k  \right ) =\sigma _{2} $. Hence, we have
\[
    \begin{split}
       P_{S} &=\max_{\sigma _{1}\ne 0,m_{1}\in\mathbb{F}_{p^n}, m_{2},\sigma _{2}\in\mathbb{F}_p  } \frac{\left | \left \{ k\in \mathbb{F} _{p^{n} } |Tr\left ( \left ( m_{1}-k^{p^{r} }   \right )k  \right )=m_{2},Tr\left ( \sigma _{1}k  \right ) =\sigma _{2}     \right \}  \right | }{\left | \left \{ k\in \mathbb{F} _{p^{n} }|Tr\left ( \left ( m_{1}-k^{p^{r} }   \right )k  \right )=m_{2}  \right \}  \right | }\\
&= \max_{\alpha  \ne0,a\in\mathbb{F}_{p^n}, b,\beta\in\mathbb{F}_p }\frac{\left | \left \{ x\in \mathbb{F} _{p^{n} }|Tr\left (  ax -x^{p^{r}+  1 }  \right )=b ,Tr\left ( \alpha x  \right ) =\beta      \right \}  \right | }{\left | \left \{ x\in \mathbb{F} _{p^{n} }|Tr\left (  ax-x^{p^{r}+  1 }      \right )=b   \right \}  \right | }.
    \end{split}
\]
For $\alpha(\neq 0),a \in \mathbb{F} _{p^{n} }$ and $b , \beta \in \mathbb{F} _{p}$, denote
\[N\left ( a, b, \alpha , \beta  \right ) =\left | \left \{ x\in \mathbb{F} _{p^{n} }|Tr\left (   ax -x^{p^{r}+  1} \right )=b \ \text{and}\ Tr\left ( \alpha x  \right ) =\beta     \right \}  \right |.\]
Next, in order to calculate $N\left ( a, b, \alpha , \beta  \right )$, we need the following lemma.
\begin{lemma}\label{lem13}
Let $n$ and $r$ be positive integers with $\gcd(n,r)=v$, and let $a,\alpha\in \mathbb{F}_{p^n}^\ast$. Assume that $\frac{n}{2v}$ is even, if neither of the equations
\[X^{p^{2r}}+X=-a^{p^r} \quad \text{nor} \quad X^{p^{2r}}+X=-\alpha^{p^r}\]
is solvable, then for $t=1,2,\ldots,p-1$, at most one of the equations
\[X^{p^{2r}}+X=(a+t\alpha)^{p^r}\]
is solvable.
\end{lemma}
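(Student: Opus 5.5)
The proof will rest on linearity alone, with no exponential sums. Put $L(X)=X^{p^{2r}}+X$. Since $x\mapsto x^{p^{j}}$ is additive and fixes $\mathbb{F}_p$, $L$ is an $\mathbb{F}_p$-linear map of $\mathbb{F}_{p^n}$ into itself. Hence its image $W=L(\mathbb{F}_{p^n})$ is an $\mathbb{F}_p$-subspace of $\mathbb{F}_{p^n}$.

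First I translate solvability into membership in a subspace. Let
\[
W'=\{c\in\mathbb{F}_{p^n} : c^{p^r}\in W\}.
\]
This is the preimage of $W$ under the $\mathbb{F}_p$-linear bijection $c\mapsto c^{p^r}$, so $W'$ is also an $\mathbb{F}_p$-subspace. The equation $L(X)=\pm c^{p^r}$ is solvable in $\mathbb{F}_{p^n}$ exactly when $\pm c^{p^r}\in W$. Because $W=-W$, this holds if and only if $c\in W'$, so the sign on the right-hand side does not matter.

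With this translation, the hypotheses say $a\notin W'$ and $\alpha\notin W'$. The conclusion says that at most one $t\in\{1,\dots,p-1\}$ has $a+t\alpha\in W'$.

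Now suppose two distinct values $t_1\neq t_2$ in $\{1,\dots,p-1\}$ both give $a+t_i\alpha\in W'$. Subtracting gives $(t_1-t_2)\alpha\in W'$. Since $t_1-t_2$ is a nonzero element of $\mathbb{F}_p$, it is invertible there, and $W'$ is an $\mathbb{F}_p$-subspace, so $\alpha\in W'$. This contradicts the assumption that $X^{p^{2r}}+X=-\alpha^{p^r}$ is unsolvable, which proves the claim.

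The hypothesis that $\frac{n}{2v}$ is even is not used in the argument. It only guarantees, through Lemma \ref{lem4} and Remark \ref{remark1} with $a=1$, that $L$ is not a permutation, so that $W$ and $W'$ are proper subspaces and the situation is not vacuous. The only point that needs care is checking that $W'$ really is an $\mathbb{F}_p$-subspace and that the sign of the right-hand side is irrelevant; the rest is immediate.
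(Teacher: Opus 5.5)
Your proof is correct and is essentially the paper's argument. The paper takes solutions $x_1,x_2$ of the two equations, subtracts them, and rescales by $(t_1-t_2)^{-1}\in\mathbb{F}_p^\ast$ to get a solution of $X^{p^{2r}}+X=-\alpha^{p^r}$; this is exactly your observation that $W'$ is an $\mathbb{F}_p$-subspace, written element by element. Your remark that neither the parity of $\frac{n}{2v}$ nor the unsolvability of the equation for $a$ is actually needed is also accurate, since the paper's proof uses only the hypothesis on $\alpha$.
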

\begin{proof}
We sketch the proof by contradiction. Assume, for contradiction, that there exist two distinct $t_1,t_2\in \{1,2,\ldots,p-1\}$ such that both equations
\[X^{p^{2r}}+X=(a+t_i\alpha)^{p^r}\quad (i=1,2)\]
are solvable in $\mathbb{F}_{p^n}$. Denote $x_i$ be a solution to the $i$-th equation, it is easy to verify that $-(x_1-x_2)$ is a solution to the equation
\[X^{p^{2r}}+X=-(t_1-t_2)\alpha^{p^r},\]
which leads to $-\frac{x_1-x_2}{t_1-t_2}$ is a solution to the equation
\[\quad X^{p^{2r}}+X=-\alpha^{p^r}.\]
This yields a contradiction to the assumption that the equation $X^{p^{2r}}+X=-\alpha^{p^r}$ is unsolvable. Hence, at most one of the equations
\[X^{p^{2r}}+X=(a+t\alpha)^{p^r}\quad (t=1,2\ldots,p-1)\]
is solvable. This competes the proof of Lemma \ref{lem13}.
\end{proof}
\begin{lemma}\label{lem11}
For $\alpha(\neq 0), a \in \mathbb{F} _{p^{n} }$ and $b , \beta \in \mathbb{F} _{p}$, denote
\[N\left ( a, b, \alpha , \beta  \right ) =\left | \left \{ x\in \mathbb{F} _{p^{n} }|Tr\left (   ax -x^{p^{r}+  1} \right )=b \ \text{and}\ Tr\left ( \alpha x  \right ) =\beta     \right \}  \right |,\]
then, we have the following result.

(1) If $n$ is odd, then
\begin{eqnarray*}
N\left ( a, b, \alpha , \beta  \right )&=&\left\{\begin{array}{ll}
p^{n-2},&  \quad \text{ if $\lambda_{\alpha}=0$ and $\Delta\neq 0$},\\
&\quad \text{or if $\lambda_{\alpha}=0$ and $\Delta=0$ and $\lambda_a-b=0$},\\
p^{n-2}+\frac{1}{p}G_nG_1\eta^\prime(b-\lambda_a),&\quad \text{ if $\lambda_{\alpha}=0$ and $\Delta=0$ and $\lambda_a-b\neq0$},\\
p^{n-2}+\frac{1}{p^2}(p-1)G_nG_1\eta^\prime(-\lambda_{\alpha}), & \quad \text{ if $\lambda_{\alpha}\neq 0$ and $\Delta=0$},\\
p^{n-2}-\frac{1}{p^2}G_nG_1\eta^\prime(-\lambda_{\alpha}), & \quad \text{if $\lambda_{\alpha}\neq 0$ and $\Delta\neq0$};
\end{array}\right.
\end{eqnarray*}

(2) if $n$ is even and $\frac{n}{v}$ is odd, then
\begin{eqnarray*}
N\left ( a, b, \alpha , \beta  \right )&=&\left\{\begin{array}{ll}
p^{n-2},&  \quad \text{ if $\lambda_{\alpha}=0$ and $\Delta\neq 0$},\\
 & \quad \text{or if $\lambda_{\alpha}\neq 0$ and $\Delta=0$},\\
p^{n-2}+\frac{1}{p}(p-1)G_n, &\quad \text{if $\lambda_{\alpha}=0$ and $\Delta=0$ and $\lambda_a-b=0$},\\
p^{n-2}-\frac{1}{p}G_n, &\quad \text{ if $\lambda_{\alpha}=0$ and $\Delta=0$ and $\lambda_a-b\neq 0$},\\
p^{n-2}+\frac{1}{p^2}G_nG_1^2\eta^\prime(\Delta), & \quad \text{if $\lambda_{\alpha}\neq0$ and $\Delta\neq0$};
\end{array}\right.
\end{eqnarray*}

(3) if $\frac{n}{v}$ is even and $\frac{n}{2v}$ is odd, then
\begin{eqnarray*}
N\left ( a, b, \alpha , \beta  \right )&=&\left\{\begin{array}{ll}
p^{n-2},&  \quad \text{ if $\lambda_{\alpha}=0$ and $\Delta\neq 0$},\\
 & \quad \text{or if $\lambda_{\alpha}\neq0$ and $\Delta=0$},\\
p^{n-2}-(p-1)p^{\frac{n}{2}-1}, &\quad \text{if $\lambda_{\alpha}=0$ and $\Delta=0$ and $\lambda_a-b=0$},\\
p^{n-2}+p^{\frac{n}{2}-1}, &\quad \text{if $\lambda_{\alpha}=0$ and $\Delta=0$ and $\lambda_a-b\neq 0$},\\
p^{n-2}-p^{\frac{n}{2}-2}G_1^2\eta^\prime(\Delta), & \quad \text{if $\lambda_{\alpha}\neq0$ and $\Delta\neq0$};
\end{array}\right.
\end{eqnarray*}

(4)if $\frac{n}{v}$ and $\frac{n}{2v}$ are even, then
\begin{eqnarray*}
N\left ( a, b, \alpha , \beta  \right )&=&\left\{\begin{array}{ll}
p^{n-2}, & \quad \text{if (\ref{equal1}) and (\ref{equal7}) are solvable and $\lambda_{\alpha}\neq0$ and $\Delta=0$},\\
& \quad \text{or if (\ref{equal1}) and (\ref{equal7}) are unsolvable},\\
&  \quad \text{or if (\ref{equal1}) and (\ref{equal7}) are solvable and $\lambda_{\alpha}=0$ and $\Delta\neq 0$},\\
p^{n-2}-(p-1)p^{\frac{n}{2}+v-1}, &  \quad \text{if (\ref{equal1}) and (\ref{equal7}) are solvable and $\lambda_{\alpha}=0$ and  $\Delta=0$ and $\lambda_a-b=0$},\\
p^{n-2}+p^{\frac{n}{2}+v-1}, &  \quad \text{if (\ref{equal1}) and (\ref{equal7}) are solvable and $\lambda_{\alpha}=0$ and $\Delta=0$ and $\lambda_a-b\neq 0$},\\
p^{n-2}-p^{\frac{n}{2}+v-2}G_1^2\eta^\prime(\Delta), & \quad \text{if (\ref{equal1}) and (\ref{equal7}) are solvable and $\lambda_{\alpha}\neq0$ and $\Delta\neq 0$},\\
p^{n-2}-(p-1)p^{\frac{n}{2}+v-2}, & \quad \text{if (\ref{equal1}) is solvable and (\ref{equal7}) is unsolvable and $\lambda_a-b=0$},\\
& \quad \text{or if and (\ref{equal1}) is unsolvable and (\ref{equal7}) is solvable and $\lambda_{t_0}-b-t_0\beta=0$},\\
p^{n-2}+p^{\frac{n}{2}+v-2}, & \quad \text{if and (\ref{equal1}) is solvable and (\ref{equal7}) is unsolvable and $\lambda_a-b\neq0$},\\
&\quad \text{or if and (\ref{equal1}) is unsolvable and (\ref{equal7}) is solvable and $\lambda_{t_0}-b-t_0\beta\neq0$};\\
\end{array}\right.
\end{eqnarray*}
Here $\lambda_a = \operatorname{Tr}(\gamma_a^{p^r+1})$, $\lambda_\alpha = \operatorname{Tr}(\gamma_\alpha^{p^r+1})$, $\lambda_{\alpha a} = \operatorname{Tr}(\gamma_\alpha^{p^r}\gamma_a+\gamma_a^{p^r}\gamma_\alpha)$, and $\Delta = 4\lambda_\alpha(\lambda_a-b) - (\lambda_{\alpha a} - \beta)^2$, where $\gamma_a$ and $\gamma_\alpha$ are the solutions of (\ref{equal1}) and (\ref{euqal4}), respectively. Moreover, $\lambda_{t_0} = \operatorname{Tr}(\gamma_{t_0}^{p^r+1})$, with $\gamma_{t_0}$ being the solution of (\ref{equal7}) corresponding to the unique value $t_0 = \frac{y_2}{y_1}\in\mathbb{F}_p^\ast$ for which (\ref{equal7}) is solvable.
\end{lemma}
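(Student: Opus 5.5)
The plan is to use character orthogonality twice, as in the proof of Lemma \ref{lem10}. With two independent variables $y_1,y_2\in\mathbb{F}_p$ we write
\[
N(a,b,\alpha,\beta)=p^{-2}\sum_{y_1,y_2\in\mathbb{F}_p}\ \sum_{x\in\mathbb{F}_{p^n}}\zeta_p^{y_1(Tr(ax-x^{p^r+1})-b)+y_2(Tr(\alpha x)-\beta)}.
\]
This splits into three kinds of terms.

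\begin{itemize}
\item The term $y_1=y_2=0$ gives $p^{n-2}$.
\item The terms with $y_1=0$ and $y_2\neq 0$ vanish, since $\alpha\neq0$ makes $\sum_x\chi_1(y_2\alpha x)=0$.
\item The terms with $y_1\neq0$ give $p^{-2}\sum_{y_1\neq0}\sum_{y_2}\zeta_p^{-y_1b-y_2\beta}S_{n,r}(-y_1,\,y_1a+y_2\alpha)$.
\end{itemize}

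So everything reduces to evaluating the special Weil sum $S_{n,r}(-y_1,y_1(a+t\alpha))$ with $t=y_2/y_1\in\mathbb{F}_p$.

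For the linearized equation $f(X)=-(y_1(a+t\alpha))^{p^r}$ with coefficient $-y_1\in\mathbb{F}_p^*$, the argument around (\ref{equal2}) shows that the solution is $x_0=-(\gamma_a+t\gamma_\alpha)$ whenever (\ref{equal1}) and the analogous equation (\ref{euqal4}) for $\alpha$ are solvable. This holds by $\mathbb{F}_p$-linearity of $X\mapsto X^{p^{2r}}+X$, and the linearity is the key point. Expanding then gives
\[
Tr\big(y_1x_0^{p^r+1}\big)=y_1\big(\lambda_a+t\lambda_{\alpha a}+t^2\lambda_\alpha\big).
\]

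In cases (1)--(3) we apply Lemma \ref{lem1}, and in case (4) Lemmas \ref{lem1} and \ref{lem5}. In the nonpermutation subcase the factor $\eta_n(-y_1)$ is replaced by a constant depending only on the parity of $\frac{n}{2v}$, via Lemma \ref{lem2}. The $y_1$-sum then becomes, up to this constant $C$,
\[
\sum_{y_1\neq0}\ \sum_{y_2}\eta(-y_1)\,\zeta_p^{\,y_1(\lambda_a-b)+y_2(\lambda_{\alpha a}-\beta)+y_2^2\lambda_\alpha/y_1},
\]
where I rewrote $t=y_2/y_1$. For fixed $y_1$ the inner sum over $y_2$ is a quadratic character sum in $y_2$.

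If $\lambda_\alpha\ne0$, Lemma \ref{lem12} evaluates it as
\[
\zeta_p^{-y_1(\lambda_{\alpha a}-\beta)^2/(4\lambda_\alpha)}\,\eta'(\lambda_\alpha/y_1)\,G_1.
\]
The remaining $y_1$-sum has exponent $y_1\Delta/(4\lambda_\alpha)$, so it is either a Gauss sum or a complete sum, depending on whether $\Delta\ne0$ and on whether $\eta(-y_1)$ is trivial, i.e. on the parity of $n$. This produces the entries involving $G_nG_1^2\eta'(\Delta)$, $(p-1)G_nG_1\eta'(-\lambda_\alpha)$, and so on.

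If $\lambda_\alpha=0$, the $y_2$-sum is $p$ or $0$ according as $\lambda_{\alpha a}=\beta$ or not. The condition $\Delta=0$ is then exactly $\lambda_{\alpha a}=\beta$. What remains is a linear sum in $y_1$ with exponent $\lambda_a-b$, which reproduces the same case split as in Lemma \ref{lem10}, scaled by $1/p$. Simplifying with $G_1^2=\eta'(-1)p$ via Lemma \ref{lem14} gives the entries of (1)--(3).

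Case (4), where $\frac{n}{2v}$ is even, is the main obstacle. Here the equation for $a+t\alpha$ may be unsolvable for some $t$, and then the corresponding $S_{n,r}$ vanishes by Lemma \ref{lem5}. The subcases are as follows.

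\begin{itemize}
\item If both (\ref{equal1}) and (\ref{euqal4}) are solvable, all $t$ work, and the computation above goes through with $C=-p^{n/2+v}$.
\item If exactly one of them is solvable, the solvable set of $t$ is a coset of the $\mathbb{F}_p$-subspace of $t$'s for which $t\alpha$ is solvable. If (\ref{equal1}) is solvable but (\ref{euqal4}) is not, only $t=0$ (i.e. $y_2=0$) survives. This yields $p^{-2}\sum_{y_1}$ of a linear character sum, hence the $\pm p^{n/2+v-2}$ terms governed by $\lambda_a-b$.
\item If neither is solvable, Lemma \ref{lem13} shows that at most one $t_0\in\mathbb{F}_p^*$ survives. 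The equation for $a+t_0\alpha$, labelled (\ref{equal7}), then contributes a single linear sum in $y_1$ with exponent $\lambda_{t_0}-b-t_0\beta$. If no $t_0$ exists, everything vanishes and $N=p^{n-2}$.
\end{itemize}

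The delicate bookkeeping is identifying correctly which $(y_1,y_2)$ pairs survive in each subcase, together with the signs $(-1)^{n/(2v)}$. I would check each subcase separately against the total count $\sum_\beta N(a,b,\alpha,\beta)=N(a,b)$ from Lemma \ref{lem10}.
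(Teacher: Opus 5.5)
Your proposal is correct and follows essentially the same route as the paper: orthogonality in two variables, reduction to $S_{n,r}(-y_1,y_1a+y_2\alpha)$ with solution $-(\gamma_a+t\gamma_\alpha)$, Lemma~\ref{lem12} for the $y_2$-sum, and, when $\frac{n}{2v}$ is even, the solvability case split using Lemma~\ref{lem13}. Two small points to tidy: the subcase where (\ref{equal1}) is unsolvable but (\ref{euqal4}) is solvable leaves no surviving $t$, so $N=p^{n-2}$ there; and the constants $-p^{\frac{n}{2}}$ (case (3)) and $-p^{\frac{n}{2}+v}$ (case (4)) come from Lemmas~\ref{lem1} and~\ref{lem5}, via $(-y_1)^{\frac{p^n-1}{p^v+1}}=1$, rather than from Lemma~\ref{lem2}.
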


\begin{proof}
By the orthogonal property of additive character, we have
\[\begin{split}
        N\left ( a,b,\alpha ,\beta  \right ) &=\frac{1}{p^2}\sum_{x\in \mathbb{F} _{p^{n} } }\sum_{y_{1}\in \mathbb{F}_{p}}\zeta _{p}^{y_{1}(Tr(ax-x^{p^{r}+ 1})-b) }\sum_{y_{2}\in \mathbb{F}_{p}}\zeta _{p}^{y_{2}(Tr(\alpha x)-\beta ) } \\
&= \frac{1}{p^2} \sum_{x\in \mathbb{F} _{p^{n} }  }(1+\sum_{y_{1}\in \mathbb{F}^{*} _{p}}\zeta _{p}^{y_{1}(Tr(ax-x^{p^{r}+ 1})-b) })(1+\sum_{y_{2}\in \mathbb{F}^{*} _{p}}\zeta _{p}^{y_{2}(Tr(\alpha x)-\beta ) })\\
&=p^{n-2}+\frac{1}{p^2}\sum_{y_{2}\in \mathbb{F}^{*}_{p} }\zeta _{p}^{-y_{2}\beta}\sum_{x\in \mathbb{F}_{p^{n}} }\zeta _{p}^{y_{2}Tr(\alpha x)}+\frac{1}{p^2}\sum_{y_{1}\in \mathbb{F}^{*}_{p} }\zeta _{p}^{-y_{1}b}\sum_{x\in \mathbb{F}_{p^{n}} }\zeta _{p}^{Tr(-y_{1}x^{p^{r}+1}+y_{1}ax)}\\
&+\frac{1}{p^2}\sum_{y_{1},y_{2}\in \mathbb{F}^{*}_{p}  }\zeta _{p}^{-y_{1}b-y_{2}\beta }\sum_{x\in \mathbb{F} _{p^{n} } }\zeta _{p}^{Tr(-y_{1}x^{p^{r}+1}+y_{1}ax+y_{2}\alpha x)}    \\
&=p^{n-2}+\frac{1}{p^2}\sum_{y_{1}\in \mathbb{F}^{*}_{p} }\zeta _{p}^{-y_{1}b}S_{n,r}(-y_{1},y_{1}a)+\frac{1}{p^2}\sum_{y_{1},y_{2}\in \mathbb{F}^{*}_{p}  }\zeta _{p}^{-y_{1}b-y_{2}\beta }S_{n,r}(-y_{1},y_{1}a+y_{2}\alpha )\\
&=p^{n-2}+\frac{1}{p^2}(\Omega_{\alpha,a}),
\end{split}
\]
where $\Omega_{\alpha,a}=\sum_{y_{1}\in \mathbb{F}^{*}_{p} }\zeta _{p}^{-y_{1}b}S_{n,r}(-y_{1},y_{1}a)+\sum_{y_{1},y_{2}\in \mathbb{F}^{*}_{p}  }\zeta _{p}^{-y_{1}b-y_{2}\beta }S_{n,r}(-y_{1},y_{1}a+y_{2}\alpha )$. For convenience, let $\lambda_{\alpha}=Tr(\gamma_{\alpha}^{p^r+1})$, where $\gamma_{\alpha}$ is the solution of
\begin{align}\label{euqal4}
X^{p^{2r}}+X=-\alpha^{p^{r}}.
\end{align}
To avoid verbosity, we provide the computational details only for the case $ \lambda_{\alpha}\neq 0$, omitting the case $\lambda_{\alpha}=0$, whose computation is straightforward.

When $a=0$, $\Omega_{\alpha,0}=\sum_{y_{1}\in \mathbb{F}^{*}_{p} }\zeta _{p}^{-y_{1}b}S_{n,r}(-y_{1},0)+\sum_{y_{1},y_{2}\in \mathbb{F}^{*}_{p}  }\zeta _{p}^{-y_{1}b-y_{2}\beta }S_{n,r}(-y_{1},y_{2}\alpha )$, and denote $\gcd(n,r)=v$, we have following cases.

{\bf Case 1} If $\frac{n}{v}$ is odd, by Lemmas \ref{lem2} and \ref{lem1}, we have
\begin{eqnarray*}
\Omega_{\alpha,0}&=&\sum_{y_{1}\in \mathbb{F}^{*}_{p}}\zeta _{p}^{-y_{1}b}G_n\eta(-y_1)+\sum_{y_{1},y_{2}\in \mathbb{F}^{*}_{p}}\zeta _{p}^{-y_{1}b-y_{2}\beta}G_n\eta(-y_1)\zeta _{p}^{\frac{y_2^2}{y_1}Tr(\gamma_{\alpha}^{p^r+1})}\\
&=&G_n\sum_{y_{1}\in \mathbb{F}^{*}_{p}}\zeta _{p}^{-y_{1}b}\eta(-y_1)+G_n\sum_{y_{1}\in \mathbb{F}^{*}_{p}}\eta(-y_1)\zeta _{p}^{-y_{1}b}\sum_{y_2\in \mathbb{F}^{*}_{p}}\zeta _{p}^{\frac{y_2^2}{y_1}Tr(\gamma_{\alpha}^{p^r+1})-y_2\beta}.
\end{eqnarray*}
By Lemma \ref{lem12}, we obtain that
\begin{eqnarray*}
\Omega_{\alpha,0}&=&G_n\sum_{y_{1}\in \mathbb{F}^{*}_{p}}\zeta _{p}^{-y_{1}b}\eta(-y_1)+G_n\sum_{y_{1}\in \mathbb{F}^{*}_{p}}\eta(-y_1)\zeta _{p}^{-y_{1}b}\sum_{y_2\in \mathbb{F}^{*}_{p}}\zeta _{p}^{\frac{\lambda_{\alpha}}{y_1}y_2^2-\beta y_2}\\
&=&\left\{\begin{array}{ll}
G_nG_1\sum_{y_{1}\in \mathbb{F}^{*}_{p}}\eta^\prime(\frac{\lambda_{\alpha}}{y_1})\zeta _{p}^{\frac{(-4\lambda_{\alpha}b-\beta^2)y_1}{4\lambda_{\alpha}}}, & \quad \text{if $n$ is even},\\
G_nG_1\sum_{y_{1}\in \mathbb{F}^{*}_{p}}\eta^\prime(-\lambda_{\alpha})\zeta _{p}^{\frac{(-4\lambda_{\alpha}b-\beta^2)y_1}{4\lambda_{\alpha}}}, & \quad \text{if $n$ is odd},
\end{array}\right.\\
\end{eqnarray*}
\begin{eqnarray*}
&=&\left\{\begin{array}{ll}
0, & \quad \text{if $n$ is even and $-4\lambda_{\alpha}b-\beta^2=0$},\\
G_nG_1^2\eta^\prime(-4\lambda_{\alpha}b-\beta^2), & \quad \text{if $n$ is even and $-4\lambda_{\alpha}b-\beta^2\neq0$},\\
(p-1)G_nG_1\eta^\prime(-\lambda_{\alpha}), & \quad \text{if $n$ is odd and $-4\lambda_{\alpha}b-\beta^2=0$},\\
-G_nG_1\eta^\prime(-\lambda_{\alpha}), & \quad \text{if $n$ is odd and $-4\lambda_{\alpha}b-\beta^2\neq0$}.
\end{array}\right.
\end{eqnarray*}

{\bf Case 2} If $\frac{n}{v}$ is even, then by Lemmas \ref{lem1} and \ref{lem5}, we have
\begin{eqnarray*}
\Omega_{\alpha,0}&=&\left\{\begin{array}{ll}
-p^{\frac{n}{2}}\sum_{y_{1}\in \mathbb{F}^{*}_{p}}\zeta _{p}^{-y_{1}b}-p^{\frac{n}{2}}\sum_{y_{1}\in \mathbb{F}^{*}_{p}}\zeta _{p}^{-y_{1}b}\sum_{y_2\in \mathbb{F}^{*}_{p}}\zeta _{p}^{\frac{\lambda_{\alpha}}{y_1}y_2^2-\beta y_2}, & \quad \text{if $\frac{n}{2v}$ is odd},\\
-p^{\frac{n}{2}+v}\sum_{y_{1}\in \mathbb{F}^{*}_{p}}\zeta _{p}^{-y_{1}b}-p^{\frac{n}{2}+v}\sum_{y_{1}\in \mathbb{F}^{*}_{p}}\zeta _{p}^{-y_{1}b}\sum_{y_2\in \mathbb{F}^{*}_{p}}\zeta _{p}^{\frac{\lambda_{\alpha}}{y_1}y_2^2-\beta y_2}, & \quad \text{if $\frac{n}{2v}$ is even and (\ref{euqal4}) is solvable},\\
-p^{\frac{n}{2}+v}\sum_{y_{1}\in \mathbb{F}^{*}_{p}}\zeta _{p}^{-y_{1}b}, & \quad \text{if $\frac{n}{2v}$ is even and (\ref{euqal4}) is unsolvable}.
\end{array}\right.
\end{eqnarray*}
By Lemma \ref{lem12}, we can obtain
\begin{eqnarray*}
\Omega_{\alpha,0}&=&\left\{\begin{array}{ll}
-p^{\frac{n}{2}}G_1\sum_{y_{1}\in \mathbb{F}^{*}_{p}}\eta^\prime(\frac{\lambda_{\alpha}}{y_1})\zeta _{p}^{\frac{(-4\lambda_{\alpha}b-\beta^2)y_1}{4\lambda_{\alpha}}}, & \quad \text{if $\frac{n}{2v}$ is odd},\\
-p^{\frac{n}{2}+v}G_1\sum_{y_{1}\in \mathbb{F}^{*}_{p}}\eta^\prime(\frac{\lambda_{\alpha}}{y_1})\zeta _{p}^{\frac{(-4\lambda_{\alpha}b-\beta^2)y_1}{4\lambda_{\alpha}}}, & \quad \text{if $\frac{n}{2v}$ is even and (\ref{euqal4}) is solvable},\\
-p^{\frac{n}{2}+v}\sum_{y_{1}\in \mathbb{F}^{*}_{p}}\zeta _{p}^{-y_{1}b}, & \quad \text{if $\frac{n}{2v}$ is even and (\ref{euqal4}) is unsolvable}.
\end{array}\right.
\end{eqnarray*}
\begin{eqnarray*}
\Omega_{\alpha,0}&=&\left\{\begin{array}{ll}
0, & \quad \text{if $\frac{n}{2v}$ is odd and $-4\lambda_{\alpha}b-\beta^2=0$},\\
-p^{\frac{n}{2}}G_1^2\eta^\prime(-4\lambda_{\alpha}b-\beta^2), & \quad \text{if $\frac{n}{2v}$ is odd and $-4\lambda_{\alpha}b-\beta^2\neq0$},\\
0, & \quad \text{if $\frac{n}{2v}$ is even and (\ref{euqal4}) is solvable and $-4\lambda_{\alpha}b-\beta^2=0$},\\
-p^{\frac{n}{2}+v}G_1^2\eta^\prime(-4\lambda_{\alpha}b-\beta^2), & \quad \text{if $\frac{n}{2v}$ is even and (\ref{euqal4}) is solvable and $-4\lambda_{\alpha}b-\beta^2\neq0$},\\
-(p-1)p^{\frac{n}{2}+v}, & \quad \text{if $\frac{n}{2v}$ is even and (\ref{euqal4}) is unsolvable and $b=0$},\\
p^{\frac{n}{2}+v}, & \quad \text{if $\frac{n}{2v}$ is even and (\ref{euqal4}) is unsolvable and $b\neq 0$}.\\
\end{array}\right.
\end{eqnarray*}

When $a\neq0$, suppose the equations (\ref{equal1}) and (\ref{euqal4}) have the solutions $\gamma_a$ and $\gamma_\alpha$, respectively. Then the equation
\begin{eqnarray}\label{equal7}
-y_1X^{p^{2r}}-y_1X=-(y_1a+y_2\alpha)^{p^r}\Longleftrightarrow X^{p^{2r}}+X=(a+\frac{y_2}{y_1}\alpha)^{p^r}
\end{eqnarray}
has a solution $x_0=-\gamma_a-\frac{y_2}{y_1}\gamma_\alpha$, where $y_1,y_2\in\mathbb{F}_p^\ast$. Since $a\neq 0$, $a$ and $\alpha$ may be linearly dependent over $\mathbb{F}_p$, i.e., there exist  $y_1,y_2\in\mathbb{F}_p^\ast$ such that $y_1a+y_2\alpha=0$, equivalently $a=-\frac{y_2}{y_1}\alpha$. In this case, one readily verifies that $\gamma_a=-\frac{y_2}{y_1}\gamma_\alpha$. By Lemmas \ref{lem2}, \ref{lem1} and \ref{lem5}, this linear dependence over $\mathbb{F}_p$ does not effect the computation of $\Omega_{\alpha,a}$, hence this case need not be treated separately. Recall that $\lambda_a=Tr(\gamma_a^{p^r+1})$ and $\lambda_\alpha=Tr(\gamma_\alpha^{p^r+1})$. For notational convenience, set $\lambda_{\alpha a}=Tr(\gamma_\alpha^{p^r}\gamma_a+\gamma_a^{p^r}\gamma_\alpha)$ and $\Delta=4\lambda_\alpha(\lambda_a-b)-(\lambda_{\alpha a}-\beta)^2$, we then obtain the following cases.

{\bf Case 1} If $\frac{n}{v}$ is odd, by Lemmas \ref{lem1} and \ref{lem5}, we have
\begin{eqnarray*}
\Omega_{\alpha,\alpha}&=&\sum_{y_{1}\in \mathbb{F}^{*}_{p}}\zeta _{p}^{-y_{1}b}G_n\eta(-y_1)\zeta _{p}^{\lambda_ay_1}+\sum_{y_{1},y_{2}\in \mathbb{F}^{*}_{p}}\zeta _{p}^{-y_{1}b-y_{2}\beta}G_n\eta(-y_1)\zeta _{p}^{\frac{\lambda_\alpha}{y_1}y_2^2+\lambda_{\alpha a}y_2+\lambda_ay_1}\\
&=&G_n\sum_{y_{1}\in \mathbb{F}^{*}_{p}}\eta(-y_1)\zeta _{p}^{(\lambda_a-b)y_1}\sum_{y_2\in \mathbb{F}_{p}}\zeta _{p}^{\frac{\lambda_\alpha}{y_1}y_2^2+(\lambda_{\alpha a}-\beta)y_2}\\
&=&\left\{\begin{array}{ll}
G_nG_1\sum_{y_{1}\in \mathbb{F}^{*}_{p}}\eta^\prime(\frac{\lambda_{\alpha}}{y_1})\zeta _{p}^{\frac{\Delta y_1}{4\lambda_{\alpha}}}, & \quad \text{if $n$ is even},\\
G_nG_1\sum_{y_{1}\in \mathbb{F}^{*}_{p}}\eta^\prime(-\lambda_{\alpha})\zeta _{p}^{\frac{\Delta y_1}{4\lambda_{\alpha}}}, & \quad \text{if $n$ is odd},
\end{array}\right.\\
\end{eqnarray*}
\begin{eqnarray*}
&=&\left\{\begin{array}{ll}
0, & \quad \text{if $n$ is even and $\Delta=0$},\\
G_nG_1^2\eta^\prime(\Delta), & \quad \text{if $n$ is even and $\Delta\neq0$},\\
(p-1)G_nG_1\eta^\prime(-\lambda_{\alpha}), & \quad \text{if $n$ is odd and $\Delta=0$},\\
-G_nG_1\eta^\prime(-\lambda_{\alpha}), & \quad \text{if $n$ is odd and $\Delta\neq0$}.
\end{array}\right.
\end{eqnarray*}

{\bf Case 2} If $\frac{n}{v}$ is even. By Lemmas \ref{lem1} and \ref{lem12}, we have

(1) when $\frac{n}{2v}$ is odd,
\begin{eqnarray*}
\Omega_{\alpha,\alpha}&=&-p^{\frac{n}{2}}\sum_{y_{1}\in \mathbb{F}^{*}_{p}}\zeta _{p}^{(\lambda_a-b)y_1}-p^{\frac{n}{2}}\sum_{y_{1}\in \mathbb{F}^{*}_{p}}\zeta _{p}^{(\lambda_a-b)y_{1}}\sum_{y_{2}\in \mathbb{F}^{*}_{p}}\zeta _{p}^{\frac{\lambda_\alpha}{y_1}y_2^2+(\lambda_{\alpha a}-\beta)y_2}\\
&=&-p^{\frac{n}{2}}\sum_{y_{1}\in \mathbb{F}^{*}_{p}}\zeta _{p}^{(\lambda_a-b)y_{1}}\sum_{y_{2}\in \mathbb{F}_{p}}\zeta _{p}^{\frac{\lambda_\alpha}{y_1}y_2^2+(\lambda_{\alpha a}-\beta)y_2}\\
&=&-p^{\frac{n}{2}}G_1\sum_{y_{1}\in \mathbb{F}^{*}_{p}}\eta^\prime(\frac{\lambda_{\alpha}}{y_1})\zeta _{p}^{\frac{\Delta y_1}{4\lambda_{\alpha}}}\\
&=&\left\{\begin{array}{ll}
0, & \quad \text{if $\Delta=0$},\\
-p^{\frac{n}{2}}G_1^2\eta^\prime(\Delta), & \quad \text{if $\Delta\neq0$}.
\end{array}\right.
\end{eqnarray*}

(2) When $\frac{n}{2v}$ is even, the equations (\ref{equal1}) and (\ref{euqal4}) may or may not admit solutions, and the solvability of equation (\ref{equal7}) then falls into the following cases. If both (\ref{equal1}) and (\ref{euqal4}) admit solutions, then the equation (\ref{equal7}) is solvable. If exactly one of them admits solutions, then the equation (\ref{equal7}) is unsolvable. If neither admits solutions, then the equation (\ref{equal7}) may or may not be solvable. Moreover, when the equation (\ref{equal7}) is solvable, by Lemma \ref{lem13}, there exists the unique $t_0=\frac{y_2}{y_1}$ for which the equation (\ref{equal7}) is solvable. Denote the corresponding solution by $\gamma_{t_0}$ and set $\lambda_{t_0}=Tr(\gamma_{t_0}^{p^r+1})$, we then have the following.
\begin{eqnarray*}
\Omega_{\alpha,\alpha}&=&\left\{\begin{array}{ll}
-p^{\frac{n}{2}+v}\sum_{y_{1}\in \mathbb{F}^{*}_{p}}\zeta _{p}^{(\lambda_a-b)y_{1}}\sum_{y_{2}\in \mathbb{F}_{p}}\zeta _{p}^{\frac{\lambda_\alpha}{y_1}y_2^2+(\lambda_{\alpha a}-\beta)y_2}, & \quad \text{if (\ref{equal1}) is solvable and (\ref{equal7}) is solvable},\\
-p^{\frac{n}{2}+v}\sum_{y_{1}\in \mathbb{F}^{*}_{p}}\zeta _{p}^{(\lambda_a-b)y_1}, & \quad \text{if (\ref{equal1}) is solvable and (\ref{equal7}) is unsolvable},\\
\sum_{y_{1}\in \mathbb{F}^{*}_{p}}\zeta _{p}^{-y_{1}b}\sum_{t_1\in \mathbb{F}^{*}_{p}}\zeta _{p}^{-t_1y_{1}\beta}S_{n,r}(-y_{1},y_{1}a+t_1y_{1}\alpha), & \quad \text{if (\ref{equal1}) is unsolvable and (\ref{equal7}) is solvable},\\
0, & \quad \text{if (\ref{equal1}) is unsolvable and (\ref{equal7}) is unsolvable}.
\end{array}\right.\\
\end{eqnarray*}
\begin{eqnarray*}
&=&\left\{\begin{array}{ll}
-p^{\frac{n}{2}+v}G_1\sum_{y_{1}\in \mathbb{F}^{*}_{p}}\eta^\prime(\frac{\lambda_{\alpha}}{y_1})\zeta _{p}^{\frac{\Delta y_1}{4\lambda_{\alpha}}}, & \quad \text{if (\ref{equal1}) is solvable and (\ref{equal7}) is solvable},\\
-p^{\frac{n}{2}+v}\sum_{y_{1}\in \mathbb{F}^{*}_{p}}\zeta _{p}^{(\lambda_a-b)y_1}, & \quad \text{if (\ref{equal1}) is solvable and (\ref{equal7}) is unsolvable},\\
-p^{\frac{n}{2}+v}\sum_{y_{1}\in \mathbb{F}^{*}_{p}}\zeta _{p}^{(\lambda_{t_0}-b-t_0\beta)y_1}, & \quad \text{if (\ref{equal1}) is unsolvable and (\ref{equal7}) is solvable},\\
0, & \quad \text{if (\ref{equal1}) is unsolvable and (\ref{equal7}) is unsolvable}.
\end{array}\right.\\
&=&\left\{\begin{array}{ll}
0, & \quad \text{if (\ref{equal1}) is solvable and (\ref{equal7}) is solvable and $\Delta=0$},\\
-p^{\frac{n}{2}+v}G_1^2\eta^\prime(\Delta), & \quad \text{if (\ref{equal1}) is solvable and (\ref{equal7}) is solvable and $\Delta\neq0$},\\
-(p-1)p^{\frac{n}{2}+v}, & \quad \text{if (\ref{equal1}) is solvable and (\ref{equal7}) is unsolvable and $\lambda_a=b$},\\
p^{\frac{n}{2}+v}, & \quad \text{if (\ref{equal1}) is solvable and (\ref{equal7}) is unsolvable and $\lambda_a\neq b$},\\
-(p-1)p^{\frac{n}{2}+v}, & \quad \text{if (\ref{equal1}) is unsolvable and (\ref{equal7}) is solvable and $\lambda_{t_0}=b+t_0\beta$},\\
p^{\frac{n}{2}+v}, & \quad \text{if (\ref{equal1}) is unsolvable and (\ref{equal7}) is solvable and $\lambda_{t_0}\neq b+t_0\beta$},\\
0, & \quad \text{if (\ref{equal1}) is unsolvable and (\ref{equal7}) is unsolvable}.
\end{array}\right.\\
\end{eqnarray*}
Note that $\Delta=-4\lambda_{\alpha}b-\beta^2$ if $a=0$, then this completes the proof of Lemma \ref{lem11}.
\end{proof}
Recall that
\[N\left ( a, b  \right )=\left | \left \{ x\in \mathbb{F} _{p^{n} }|Tr\left (  ax-x^{p^{r}+  1 }\right )=b   \right \}  \right |,\]
and then
\begin{align}
    P_{S} =\max_{\alpha  \ne0,a,b,\beta  }\frac{N\left ( a,b,\alpha ,\beta  \right ) }{N\left ( a,b  \right ) }.
\end{align}
Base on Lemmas \ref{lem10} and \ref{lem11}, the $P_{S}$ of the authentication code in (\ref{equal5}) will be given by the following theorem.
\begin{theorem}\label{thm2}
Let $n$ and $r$ be two positive integers with $v=gcd(n,r)$, for the authentication code in (\ref{equal5}), we have the following
\begin{eqnarray*}
 P_{S} &=&\left\{\begin{array}{ll}
  \frac{p^{n-2}+(p-1)p^{\frac{n-3}{2}}}{p^{n-1}}, & \text{ if $n$ is odd},\\
  \frac{p^{n-2}+p^{\frac{n}{2}-1}}{p^{n-1}-p^{\frac{n}{2}-1}} , & \text{ if $n$ is even and $\frac{n}{v}$ is odd}, \\
 \frac{p^{n-2}}{p^{n-1}-(p-1)p^{\frac{n}{2}-1}} ,& \text{ if $\frac{n}{v}$ is even and $\frac{n}{2v}$ is odd}, \\
 \frac{p^{n-2}}{p^{n-1}-(p-1)p^{\frac{n}{2}+v-1}}, & \text{ if $\frac{n}{v}$ is even and $\frac{n}{2v}$ is even}.
\end{array}\right.\\
\end{eqnarray*}
\end{theorem}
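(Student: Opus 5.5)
We obtain $P_S$ by maximizing $N(a,b,\alpha,\beta)/N(a,b)$ over all admissible $(a,b,\alpha,\beta)$, using Lemma \ref{lem10} for the denominator and Lemma \ref{lem11} for the numerator. The key observation is that the case labels of Lemma \ref{lem11} are refinements of those of Lemma \ref{lem10}. The triple $(a,b)$ with $\lambda_a=b$ versus $\lambda_a\neq b$ (resp.\ $a=0,b=0$ versus $a=0,b\ne0$) determines $N(a,b)$. The extra data $\lambda_\alpha$, $\Delta$ (and, in case (4), the solvability of (\ref{equal1}) and (\ref{equal7})) determine $N(a,b,\alpha,\beta)$. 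For each of the four parity regimes we therefore list every compatible pair (numerator value, denominator value), compute the ratio, and take the largest. We also check that the maximizing configuration is actually realized by some $\alpha\neq0,\beta$, for instance by choosing $\beta$ so that $\Delta$ has a prescribed quadratic character or vanishes.

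Throughout we substitute the Gauss sum values from Lemma \ref{lem14}: $|G_n|=p^{n/2}$, $G_1^2=\eta'(-1)p$, and for odd $n$ the product $G_nG_1=\pm p^{(n+1)/2}$ with a sign that can be absorbed into the free quadratic characters $\eta'(\cdot)$.

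For $n$ odd, the denominator is $p^{n-1}$ when $\lambda_a=b$, and $p^{n-1}\pm p^{(n-1)/2}$ otherwise. The numerator is largest when $\lambda_\alpha\neq0$ and $\Delta=0$, giving $p^{n-2}+(p-1)p^{(n-3)/2}$ after choosing the sign of $\eta'(-\lambda_\alpha)$. Since $\Delta=0$ forces $\lambda_a-b=(\lambda_{\alpha a}-\beta)^2/(4\lambda_\alpha)$, we may choose $\beta=\lambda_{\alpha a}$ to get $\lambda_a=b$, so the denominator is $p^{n-1}$. This gives the stated ratio. We then compare it with the alternatives: the $\Delta=0$ case with $\lambda_a\neq b$ yields a ratio of at most $(p^{n-2}+(p-1)p^{(n-3)/2})/(p^{n-1}-p^{(n-1)/2})$, which must be compared carefully. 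The cases $\lambda_\alpha=0$ yield smaller ratios because the numerator correction is divided by a factor $p$ less favorably.

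For $n$ even with $n/v$ odd, $G_n=\pm p^{n/2}$. The candidate is numerator $p^{n-2}+p^{n/2-1}$ (from $\lambda_\alpha\ne0$, $\Delta\ne0$ with $G_1^2\eta'(\Delta)$ chosen positive), paired with the minimal compatible denominator $p^{n-1}-p^{n/2-1}$. For $n/v$ even we read off case (3) and case (4). There the largest ratio comes from numerator $p^{n-2}$ over the smallest denominator $p^{n-1}-(p-1)p^{n/2-1}$ (respectively $p^{n-1}-(p-1)p^{n/2+v-1}$), attained when $\lambda_a=b$ and $\lambda_\alpha\neq0$, $\Delta=0$. This must be compared with positive-correction numerators such as $p^{n-2}+p^{n/2+v-2}$ occurring with denominator $p^{n-1}+p^{n/2+v-1}$.

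The main obstacle is this exhaustive comparison. For each regime we must (i) verify that the case combinations used are simultaneously realizable, which in case (4) requires tracking the solvability constraints from Lemma \ref{lem13}, and (ii) show by explicit inequalities that no other pairing exceeds the claimed maximum. Pairings with a positive numerator correction and a slightly smaller denominator are the delicate ones. In each regime we would reduce the comparison to a polynomial inequality in $p^{n/2}$ (using $n\ge4$) and check it directly.
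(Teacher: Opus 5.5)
\textbf{Verdict: genuine gap.} Your plan follows the paper's own route. The paper's proof of Theorem \ref{thm2} also just exhibits one configuration per regime (e.g.\ $\lambda_\alpha\neq0$, $a=b=0$, $\Delta=0$ for $n$ odd) and then invokes Lemma \ref{lem14}. The gap is the step you defer: showing that no other admissible $(a,b,\alpha,\beta)$ gives a larger ratio. This is not a routine inequality check, because the signs in Lemmas \ref{lem10} and \ref{lem11} are coupled.

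\textbf{The coupling you miss.} For $n$ odd, $\Delta=0$ with $\lambda_a\neq b$ means $4\lambda_\alpha(\lambda_a-b)=(\lambda_{\alpha a}-\beta)^2$, so $\eta^{\prime}(b-\lambda_a)=\eta^{\prime}(-\lambda_\alpha)$. Hence the numerator $p^{n-2}+(p-1)p^{\frac{n-3}{2}}$ necessarily comes with the denominator $p^{n-1}+p^{\frac{n-1}{2}}$. The bound $\frac{p^{n-2}+(p-1)p^{\frac{n-3}{2}}}{p^{n-1}-p^{\frac{n-1}{2}}}$ you wrote already exceeds the target, so no careful comparison can dispose of it; only this coupling does. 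Your reason for discarding $\lambda_\alpha=0$ is also backwards. There the correction $\frac{1}{p}G_nG_1\eta^{\prime}(b-\lambda_a)=\pm p^{\frac{n-1}{2}}$ is larger, and the case loses only because the denominator carries the same factor $\eta^{\prime}(b-\lambda_a)$.

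\textbf{The stated values fail in regimes (1) and (2).} Once the coupling is tracked, the comparison contradicts the theorem as stated, so these regimes cannot be proved by your plan or any other.

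For $n$ odd, choose $\alpha$ with $G_nG_1\eta^{\prime}(-\lambda_\alpha)=-p^{\frac{n+1}{2}}$, then $b=\lambda_a-\lambda_\alpha$ and $\beta=\lambda_{\alpha a}$, so that $\Delta=4\lambda_\alpha^2\neq0$. Lemmas \ref{lem10} and \ref{lem11} then give the ratio $\frac{p^{n-2}+p^{\frac{n-3}{2}}}{p^{n-1}-p^{\frac{n-1}{2}}}$. This exceeds the claimed value iff $(p-3)p^{\frac{n-1}{2}}<p-1$, i.e.\ exactly when $p=3$. For $p=3$, $n=5$ it equals $\frac{30}{72}>\frac{33}{81}$.

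For $n$ even and $\frac{n}{v}$ odd, the denominator $p^{n-1}-p^{\frac{n}{2}-1}=p^{n-1}-p^{-1}G_n$ requires $G_n=p^{\frac{n}{2}}$. By Lemma \ref{lem14} this holds only for $p\equiv3\pmod{4}$ and $n\equiv2\pmod{4}$; your proposal writes $G_n=\pm p^{n/2}$ but then silently takes the plus sign.
\begin{itemize}
\item If $G_n=-p^{\frac{n}{2}}$, the true maximum is $\frac{p^{n-2}}{p^{n-1}-(p-1)p^{\frac{n}{2}-1}}$, strictly below the claim. For $p=3$, $n=r=4$ it is $\frac{3}{7}$, not $\frac{1}{2}$.
\item If $G_n=p^{\frac{n}{2}}$, the configuration $\lambda_\alpha=0$, $\Delta=0$, $\lambda_a=b$ gives $\frac{p^{n-2}+(p-1)p^{\frac{n}{2}-1}}{p^{n-1}+(p-1)p^{\frac{n}{2}-1}}$, which beats the claim whenever $p>3$.
\end{itemize}
So the formula in (2) survives only for $p=3$, $n\equiv2\pmod{4}$.

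\textbf{Where your maximizer does work.} In regimes (3) and (4), every positive numerator correction is forced onto a denominator of at least $p^{n-1}$. There the full comparison does go through with your maximizer: $\lambda_a=b$, $\lambda_\alpha\neq0$, $\Delta=0$, with (\ref{equal1}) and (\ref{euqal4}) solvable in (4). The comparison also goes through in (1) for $p\geq5$.
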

\begin{proof}
Let the notations be defined as before. We check the conditions of Lemmas \ref{lem10} and \ref{lem11}, then obtain the following
\begin{eqnarray*}
 P_{S} &=&\left\{\begin{array}{ll}
  \frac{p^{n-2}+p^{-2}(p-1)G_nG_1\eta^\prime(-\lambda_\alpha)}{p^{n-1}}, & \text{ if $n$ is odd and $\lambda_\alpha\neq 0$ $a=0$, $b=0$, $\Delta=0$},\\
  \frac{p^{n-2}+p^{-2}G_nG_1^2\eta^\prime(\Delta)}{p^{n-1}-p^{-1}G_n} , & \text{ if $n$ is even and $\frac{n}{v}$ is odd and $\lambda_\alpha\neq 0$, $\Delta\neq 0$, $a=0$, $b\neq 0$}, \\
 \frac{p^{n-2}}{p^{n-1}-(p-1)p^{\frac{n}{2}-1}} ,& \text{ if $\frac{n}{v}$ is even and $\frac{n}{2v}$ is odd  and $\lambda_\alpha\neq 0$, $\Delta=0$, $a=0$, $b=0$}, \\
 \frac{p^{n-2}}{p^{n-1}-(p-1)p^{\frac{n}{2}+v-1}}, & \text{ if $\frac{n}{v}$ and $\frac{n}{2v}$ are even and (\ref{equal1}) and (\ref{equal7}) are solvable, $\lambda_\alpha\neq 0$, $a=0$, $b=0$, $\Delta=0$}.
\end{array}\right.\\
\end{eqnarray*}
The desired results follows from Lemma \ref{lem14}. This completes the proof of Theorem \ref{thm2}.
\end{proof}

\subsection{Optimality of the codes}
In this subsection, we will show $P_I$ and $P_S$ of the authentication code in (\ref{equal5}) asymptotically meets the bounds of Lemma \ref{lem8} and Lemma \ref{lem9}.
\begin{theorem}\label{thm3}
Let the notations be defined as in Theorem \ref{thm1}, then the authentication code in (\ref{equal5}) meets the lower bound of Lemma \ref{lem9} asymptotically.
\end{theorem}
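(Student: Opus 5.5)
The plan is to compute the two combinatorial lower bounds of Lemma \ref{lem9} for the code in (\ref{equal5}) and compare them with the exact values of $P_I$ and $P_S$ from Theorems \ref{thm1} and \ref{thm2}. Here $|\mathscr{S}|=p^n$ and $|\mathscr{M}|=p^n\cdot p=p^{n+1}$. Hence the bounds are
\[
P_I\geq \frac{|\mathscr{S}|}{|\mathscr{M}|}=\frac{1}{p},\qquad P_S\geq \frac{|\mathscr{S}|-1}{|\mathscr{M}|-1}=\frac{p^n-1}{p^{n+1}-1}.
\]
By ``meets the bound asymptotically'' I mean that the ratio of each actual probability to its bound tends to $1$ as $n\to\infty$. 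In the fourth case I additionally require that $v=\gcd(n,r)$ stays bounded, or at least that $\frac{n}{2}-v\to\infty$.

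For $P_I$ I will go through the four cases of Theorem \ref{thm1}. In each case $P_I=\frac1p+\varepsilon_n$, where $\varepsilon_n$ is one of $p^{-\frac{n+1}{2}}$, $(p-1)p^{-\frac n2-1}$, $p^{-\frac n2-1}$ or $p^{-\frac n2+v-1}$. The ratio is $P_I/(1/p)=1+p\varepsilon_n$. In the first three cases this is $1+O(p^{-\frac{n-1}{2}})$, which tends to $1$. In the fourth case we have $\frac{n}{2v}$ even, so $v\le n/4$. This gives $p\varepsilon_n=p^{-(\frac n2-v)}\le p^{-n/4}$, which also tends to $0$ without any extra hypothesis. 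So $P_I$ attains the bound asymptotically in every case.

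For $P_S$, I first note that $\frac{p^n-1}{p^{n+1}-1}=\frac1p\bigl(1-O(p^{-n})\bigr)$, so it suffices to show $pP_S\to1$. I will go through the four cases of Theorem \ref{thm2}. In the odd case, $pP_S=1+(p-1)p^{-\frac{n-1}{2}}$. In the other cases I will factor $p^{n-1}$ out of the denominator to get expressions of the form $\frac{1+p^{-\frac n2+1}}{1-p^{-\frac n2}}$, $\frac{1}{1-(p-1)p^{-\frac n2}}$ and $\frac{1}{1-(p-1)p^{-\frac n2+v}}$. Each of these tends to $1$, using again $v\le n/4$ in the last case. Dividing by the bound's factor $\frac{p^{n+1}-p}{p^{n+1}-1}\to1$ then finishes the argument.

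The only delicate point is the fourth case. There the error term involves $p^{v}$, and the bound $v\le n/4$, which follows from $4v\mid n$, is what makes the error vanish. Everything else is direct simplification of the formulas already obtained.
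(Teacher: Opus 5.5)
Your proposal is correct and follows the paper's proof: compute $|\mathscr{S}|/|\mathscr{M}|=1/p$ and $(|\mathscr{S}|-1)/(|\mathscr{M}|-1)=(p^n-1)/(p^{n+1}-1)$, substitute the values of $P_I$ and $P_S$ from Theorems \ref{thm1} and \ref{thm2}, and show that the ratios tend to $1$ as $n\to\infty$. The paper writes out only the case $n$ odd and calls the others similar. Your treatment of all four cases supplies what it omits, in particular the remark that $\frac{n}{2v}$ even gives $4v\mid n$, hence $v\le n/4$ and $p^{-\frac{n}{2}+v}\le p^{-n/4}\to 0$, so no extra hypothesis on $v$ is needed.
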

\begin{proof} We only give the proof for $n$ is odd, omit the proof of other cases, whose proof is similar to this case. For convenience, let
 \[
 R=\frac{|\mathscr{S}|}{|\mathscr{M}|}=\frac{p^{n}  }{p^{n}\times p } =\frac{1}{p},\  \text{and}\ P=\frac{|\mathscr{S}|-1}{|\mathscr{M}|-1}=\frac{p^{n}-1  }{p^{n}\times p-1 } =\frac{p^{n}- 1}{p^{n+  1}-1 },
 \]
 by Theorem \ref{thm1} and \ref{thm2}, we have
 \[P_{I}=\frac{1}{p}+\frac{1}{p^{\frac{n+1}{2} } },\  \text{and} \  P_{S}=\frac{p^{n-2}+(p-1)p^{\frac{n-3}{2}}}{p^{n-1}}.\]
 Thus,
 \[\lim_{n \to \infty} \frac{R}{P_{I} } =1\ \text{and}\ \lim_{n \to \infty} \frac{P}{P_{S} } =1.\]
 This completes the proof of Theorem \ref{thm3}.
 \end{proof}

\begin{theorem}\label{thm4}
 Let the notations be defined as in Theorem \ref{thm1}, then the $P_{I}$ of the authentication code in (\ref{equal5}) meets the lower bound of Lemma \ref{lem8} asymptotically.
 \end{theorem}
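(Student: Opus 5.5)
We need to show that $P_I$ asymptotically meets $2^{H(\mathscr{E}|\mathscr{M})-H(\mathscr{E})}$. The plan is to compute $H(\mathscr{E})$ and $H(\mathscr{E}|\mathscr{M})$ exactly, or at least to leading order, using the uniform distributions on $\mathscr{K}$ and $\mathscr{S}$ and the counting function $N(a,b)$ of Lemma \ref{lem10}.

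Since keys are uniform and $k\mapsto E_k$ is injective (from $E_k(0)=(k^{p^r},0)$ and injectivity of $k\mapsto k^{p^r}$), we have $H(\mathscr{E})=n\log_2 p$. For the conditional entropy, fix a message $m=(m_1,m_2)$. The pairs $(s,k)$ producing $m$ are exactly the $k$ with $Tr((m_1-k^{p^r})k)=m_2$, with $s=m_1-k^{p^r}$ determined by $k$. Hence
\[\Pr(\mathscr{M}=m)=\frac{N(m_1,m_2)}{p^{2n}},\]
and conditional on $m$ the key is uniform on a set of size $N(m_1,m_2)$. Therefore
\[H(\mathscr{E}|\mathscr{M})=\sum_{m_1,m_2}\frac{N(m_1,m_2)}{p^{2n}}\log_2 N(m_1,m_2).\]

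Next I will bound this entropy using Lemma \ref{lem10}. In every case, $N(a,b)=p^{n-1}(1+\epsilon_{a,b})$ with $|\epsilon_{a,b}|\le (p-1)p^{-\frac{n}{2}+v}$ or smaller. Here $v\le n/4$ in case (4), and $p^{-(n-1)/2}$ or $p^{-n/2}$ in the other cases. Since $\sum_{m}N(m_1,m_2)=p^{2n}$ is a probability normalization, I get
\[H(\mathscr{E}|\mathscr{M})\ge (n-1)\log_2 p+\log_2(1-\epsilon_{\max}),\]
and similarly $\le (n-1)\log_2p+\log_2(1+\epsilon_{\max})$. This gives
\[2^{H(\mathscr{E}|\mathscr{M})-H(\mathscr{E})}\ge \frac{1}{p}(1-\epsilon_{\max}).\]
Conversely, this quantity is at most $\frac1p(1+\epsilon_{\max})$, so the bound tends to $1/p$ in ratio.

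Finally, I compare with Theorem \ref{thm1}. There $P_I=\frac1p+O(p^{-\frac n2+v-1})$ in the worst case, so $P_I/(1/p)\to1$, and hence $2^{H(\mathscr{E}|\mathscr{M})-H(\mathscr{E})}/P_I\to 1$ as $n\to\infty$. As in Theorem \ref{thm3}, I would write the details for $n$ odd, where $\epsilon_{\max}=p^{-(n-1)/2}$ via $|G_nG_1|=p^{(n+1)/2}$ from Lemma \ref{lem14}, and note that the other cases differ only in the error exponent.

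The main obstacle is case (4), where $v$ may be large. The error term $p^{\frac n2+v-1}$ is only $o(p^{n-1})$ when $\frac n2-v\to\infty$. The asymptotic statement must therefore be read with $n\to\infty$ such that $n/(2v)$ is even; then $v\le n/4$, so $\frac n2-v\ge n/4\to\infty$. I would state this assumption explicitly, which is also implicit in Theorem \ref{thm1}. A secondary point is justifying $\Pr(\mathscr{M}=m)=N(m_1,m_2)/p^{2n}$, which requires that there is no splitting and that $s$ is uniquely determined by $(m,k)$; both follow directly from the form of (\ref{equal5}).
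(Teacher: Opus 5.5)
Your argument is correct, but it takes a different and leaner route than the paper. The paper tries to compute $H(\mathscr{E}|\mathscr{M})$ exactly. In each case of Lemma \ref{lem10} it tabulates how many messages $(a,b)$ fall into each value class of $N(a,b)$; in case (4) this needs Lemma \ref{lem6} to count the $a$ for which (\ref{equal1}) is solvable. It then writes $Q=2^{H(\mathscr{E}|\mathscr{M})-H(\mathscr{E})}$ as an explicit product of powers and asserts $\lim_{n\to\infty}Q/P_I=1$, writing this out only for $n$ odd and for case (4).

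You never need the class sizes. You use only two facts: $H(\mathscr{E}|\mathscr{M})$ is a probability-weighted average of $\log_2 N(a,b)$, and $\epsilon_{\max}=\max_{a,b}|N(a,b)p^{-(n-1)}-1|\to 0$. Together these pin $Q$ between $\frac1p(1-\epsilon_{\max})$ and $\frac1p(1+\epsilon_{\max})$, while $\frac1p\le P_I\le\frac1p(1+\epsilon_{\max})$.

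What each route buys: the paper's route aims at the exact value of $H(\mathscr{E}|\mathscr{M})$. Yours handles all four cases at once and gives an explicit rate $|Q/P_I-1|=O(\epsilon_{\max})$. It is also insensitive to the signs of the Gauss sums and to the exact class counts in the paper's tables, since only $|N(a,b)-p^{n-1}|$ enters.

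Two small remarks:
\begin{itemize}
\item The upper bound on $Q$ is redundant, because Lemma \ref{lem8} already gives $Q\le P_I$.
\item Case (4) needs no additional hypothesis, since $4\mid n/v$ forces $v\le n/4$. In fact $\epsilon_{\max}\le (p-1)p^{-n/4}$ in every case, so your limit holds uniformly in $r$.
\end{itemize}
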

 \begin{proof}
 We clearly know $H\left ( \mathscr{E}  \right ) =\log_{2}{p^{n} } $. Next, we compute $H(\mathscr{E}|\mathscr{M})$. Since all source states and encoding rules are used with equal probability, suppose that a message $m=(a,b  )$ has been observed, we will compute the number of $(s,k)$ such that $(a,b)=\left ( s+  k^{p^{r} },Tr\left ( sk \right )   \right )$, namely the number of $k$ such that
 $b=Tr(ak-k^{p^{r}+  1})$, then we can obtain the probability distribution of the messages. For convenience, denote
 \[N(a,b)=|\{k\in\mathbb{F}_{p^n}|Tr(ak-k^{p^{r}+  1})\}=b|. \]
Based on Lemma \ref{lem10}, we prove Theorem \ref{thm4} by considering the following cases.

If $n$ is odd, by Lemma \ref{lem10}, we can get
\[\begin{split}
&N(0,0)=p^{n-1},\\
&N(0,b)=p^{n-1}+p^{\frac{n-1}{2} }\eta ^{\prime }(b),\\
&N(a,0)
=\begin{cases}
p^{n-1} ,  & \text{ if } \lambda_a =0, \\
p^{n-1}+p^{\frac{n-1}{2} }\eta ^{\prime }(-\lambda_a ),    & \text{ if } \lambda_a \ne 0,
\end{cases}\\
&N(a,b)
=\begin{cases}
p^{n-1},   & \text{ if } \lambda_a =b, \\
p^{n-1}+p^{\frac{n-1}{2} }\eta ^{\prime }(b-\lambda_a ),    & \text{ if } \lambda_a \ne b.
\end{cases}
\end{split}
\]
Based on the discussions above, the following table is obtained.
\begin{table}[H]
  \begin{center}
    \caption{}
    \begin{tabular}{l|c|c|r} 
      $m$ & $\text{Uncertainty of}\quad (s,k)$ & $\text{Probability of}\quad m$&$\text{Number of}\quad m$ \\
      \hline
      $a = 0,b =0$ & $\log_{2}{p^{n- 1}} $ & $\frac{p^{n-1} }{p^{2n} } $&$1$\\
     $a = 0,b \ne 0$ & $  \log_{2}{(p^{n- 1}+p^{\frac{n-1}{2} }\eta ^{\prime} (b) )}  $ &$\frac{p^{n-1}+  p^{\frac{n-1}{2} }\eta ^{\prime} (b)  }{p^{2n} } $&$p-1$\\
     $a \ne 0,b = \lambda_a$ & $\log_{2}{ p^{n- 1}}  $ &$\frac{p^{n-1}}{p^{2n} }  $&$p^{n}-1$\\
     $a \ne 0,b\ne \lambda_a$ & $ \log_{2}{(p^{n- 1}+ p^{\frac{n-1}{2} }\eta ^{\prime }(b-\lambda_a )   )}$ &$\frac{p^{n-1}+p^{\frac{n-1}{2} }\eta ^{\prime }(b-\lambda_a ) }{p^{2n} }$&$(p^{n} -1)(p-1)$\\
    \end{tabular}
  \end{center}
\end{table}
Then, we have
\[
\begin{split}
  H(\mathscr{E}  |\mathscr{M} )&=\frac{p^{n-1} }{p^{2n} }\log_{2}{p^{n- 1}}+(p-1)\frac{p^{n-1}+  p^{\frac{n-1}{2} }\eta ^{\prime} (b)  }{p^{2n} } \log_{2}{(p^{n- 1}+p^{\frac{n-1}{2} }\eta ^{\prime } (b) )}+\\
&(p^{n}-1)\frac{p^{n-1}}{p^{2n} }\log_{2}{ p^{n- 1}} +((p^{n} -1)(p-1))\frac{p^{n-1}+p^{\frac{n-1}{2} }\eta ^{\prime }(b-\lambda_a)    }{p^{2n} }\log_{2}{(p^{n- 1}+ p^{\frac{n-1}{2} }\eta ^{\prime }(b-\lambda_a )   )}
\end{split}
\]
and
\[\begin{split}
    Q&=2^{H(\mathscr{E} |\mathscr{M} )-H(\mathscr{E} )}\\
     &=\frac{
p^{(n-1)\left(1 + \frac{(p-1)p^{\frac{n-1}{2}}\eta'(b)+(p^{n+1}-p^n-p+1)p^{\frac{n-1}{2}}\eta ^{\prime }(b-\lambda_a)}{p^{2n}} \right)}
\cdot
\bigl( 1 + p^{-\frac{n-1}{2}} \eta ^{\prime }(b) \bigr)^{(p-1)\frac{p^{n-1}+p^{\frac{n-1}{2}}\eta ^{\prime }(b)}{p^{2n}}}
}{p^n}\\
&\times \frac{\bigl( 1 + p^{-\frac{n-1}{2}} \eta ^{\prime }(b-\lambda_a) \bigr)^{(p^n-1)(p-1)\frac{p^{n-1}+p^{\frac{n-1}{2}}\eta ^{\prime }(b-\lambda_a)}{p^{2n}}}}{1} .
\end{split}
\]
By Theorem \ref{thm1}, we have $P_I=\frac{1}{p}+\frac{1}{p^{\frac{n+1}{2}}}$, it follows that
\[\lim_{n \to \infty} \frac{Q}{P_{I}} =1.\]

If $n$ is even and $\frac{n}{v} $ is odd or $\frac{n}{v} $ is even and $\frac{n}{2v} $ is odd, using the same method, one can get
\[\lim_{n \to \infty} \frac{Q}{P_{I}} =1,\]
we omit the details of calculation for convenience.

If $n$ is even and $\frac{n}{2v}$ is even, by Lemma \ref{lem10}, we can get
\[\begin{split}
&N(0,0)= p^{n-1}-(p-1)p^{\frac{n}{2}+v-1},\\
&N(0,b)= p^{n-1}+p^{\frac{n}{2}+v-1},\\
&N(a,0)=\begin{cases}
  p^{n-1}-(p-1)p^{\frac{n}{2}+v-1},  & \text{ if } \lambda_a =0,(\ref{equal1})\ is\  solvable ,\\
  p^{n-1}+p^{\frac{n}{2}+v-1},  & \text{ if } \lambda_a \ne 0,(\ref{equal1})\ is\  solvable ,\\
  p^{n-1}, & \text{ if } (\ref{equal1})\ is\  unsolvable,
\end{cases}\\
&N(a,b)=\begin{cases}
  p^{n-1}-(p-1)p^{\frac{n}{2}+v-1},  & \text{ if } \lambda_a =b,(\ref{equal1})\ is\  solvable, \\
  p^{n-1}+p^{\frac{n}{2}+v-1},  & \text{ if } \lambda_a \ne b,(\ref{equal1})\ is\  solvable ,\\
  p^{n-1} ,& \text{ if } (\ref{equal1})\ is\  unsolvable.
\end{cases}
\end{split}
\]
Based on the discussions above, the following table is obtained.
\begin{table}[H]
  \begin{center}
    \caption{}
    \begin{tabular}{l|c|c|r} 
      $m$ & $\text{Uncertainty of}\quad (s,k)$ & $\text{Probability of}\quad m$&$\text{Number of}\quad m$ \\
      \hline
      $a = 0,b =0$ & $\log_{2}({p^{n-1}-  (p-1)(p^{\frac{n}{2}+v-1} )})  $ & $\frac{p^{n-1}-  (p-1)(p^{\frac{n}{2}+v-1} )}{p^{2n}}   $&$1$\\
     $a= 0,b \ne 0$ & $\log_{2}({p^{n-1}+  p^{\frac{n}{2}+v -1}})$ &$\frac{p^{n-1}+  p^{\frac{n}{2}+v -1}}{p^{2n}}  $&$p-1$\\
     $a \ne 0,b = \lambda_a,(\ref{equal1})\ is\  solvable$ & $\log_{2}({p^{n-1}-(p-1)p^{\frac{n}{2}+v-1}} )$ &$\frac{p^{n-1}-(p-1)p^{\frac{n}{2}+v-1}}{p^{2n} } $&$p^{n-2v}$\\
     $a \ne 0,b \ne \lambda_a,(\ref{equal1})\ is\  solvable$ & $\log_{2}({p^{n-1}+p^{\frac{n}{2}+v-1}} )$ &$\frac{p^{n-1}+p^{\frac{n}{2}+v-1} }{p^{2n} }  $&$p^{n-2v} (p-1)$\\
     $a \ne 0,(\ref{equal1})\ is\  unsolvable$ & $\log_{2}{p^{n-1}} $ &$\frac{p^{n-1}}{p^{2n} } $&$p(p^{n}-1)-p^{n-2v+1}$\\
    \end{tabular}
  \end{center}
\end{table}
\noindent
Then, we have
\[
\begin{split}
    H(\mathscr{E}  |\mathscr{M} )&=\frac{p^{n-1}-  (p-1)(p^{\frac{n}{2}+v-1} )}{p^{2n}} \log_{2}({p^{n-1}-  (p-1)(p^{\frac{n}{2}+v-1} )}) +(p-1)\frac{p^{n-1}+  p^{\frac{n}{2} +v-1}}{p^{2n}}\log_{2}({p^{n-1}+  p^{\frac{n}{2}+v -1}})+ \\
    &(p^{n-2v})\frac{p^{n-1}-(p-1)p^{\frac{n}{2}+v-1}}{p^{2n} }\log_{2}({p^{n-1}-(p-1)p^{\frac{n}{2}+v-1}} )+
[p(p^{n}-1)-p^{n-2v+1}]\frac{p^{n-1} }{p^{2n} }\log_{2}({p^{n-1}} )+ \\
    &p^{n-2v}(p-1)\frac{p^{n-1}+p^{\frac{n}{2}+v-1}}{p^{2n} }\log_{2}({p^{n-1}+p^{\frac{n}{2}+v-1}} )
\end{split}
\]
and
\[\begin{split}
    Q&=2^{H(\mathscr{E} |\mathscr{M} )-H(\mathscr{E} )}\\
    &=\frac{
p^{n-1}
\cdot
\Bigl( 1 - (p-1)p^{-\frac{n}{2}+v} \Bigr)^{\frac{(1+p^{n-2v})(p^{n-1}-(p-1)p^{\frac{n}{2}+v-1})}{p^{2n}}}
\cdot
\Bigl( 1 + p^{-\frac{n}{2}+v} \Bigr)^{\frac{(p-1)(1+p^{n-2v})(p^{n-1}+p^{\frac{n}{2}+v-1})}{p^{2n}}}
}{p^n}.
    \end{split}
\]
By Theorem \ref{thm1}, we have $P_I=\frac{1}{p}+\frac{1}{p^{\frac{n}{2}-v+1}}$, it follows that
\[\lim_{n \to \infty} \frac{Q}{P_{I}} =1.\]
This completes the proof of Theorem \ref{thm4}.
\end{proof}
\begin{theorem}\label{thm5}
Let the notations be defined as in Theorem \ref{thm1}, then the $P_{S}$ of the authentication code in (\ref{equal5}) meets the lower bound of Lemma \ref{lem8} asymptotically.
\end{theorem}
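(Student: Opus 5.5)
Our plan is to follow the template of the proof of Theorem \ref{thm4}. We write $Q'=2^{H(\mathscr{E}|\mathscr{M}^2)-H(\mathscr{E}|\mathscr{M})}$, compute $H(\mathscr{E}|\mathscr{M}^2)$ explicitly, and show $\lim_{n\to\infty}Q'/P_S=1$ with $P_S$ taken from Theorem \ref{thm2}. Keys and sources are uniform, and two distinct messages arise from two distinct source states $s\neq s'$ under the same key. The pair of observed messages $(m,m')=((a,b),(a+\alpha,b+\beta))$ with $\alpha\neq 0$ is therefore produced by exactly those keys $k$ with $Tr(ak-k^{p^r+1})=b$ and $Tr(\alpha k)=\beta$. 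So the number of keys compatible with $(m,m')$ is $N(a,b,\alpha,\beta)$ of Lemma \ref{lem11}. Consequently
\[
H(\mathscr{E}|\mathscr{M}^2)=\sum_{(a,b,\alpha,\beta)}\Pr(m,m')\log_2 N(a,b,\alpha,\beta),
\]
where $\Pr(m,m')$ is proportional to $N(a,b,\alpha,\beta)$, normalized by the total count $\sum N(a,b,\alpha,\beta)=p^n\cdot p^n(p^n-1)$ over ordered pairs.

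First, we tabulate the values of $N(a,b,\alpha,\beta)$ from Lemma \ref{lem11} in each of the four cases for $(n,n/v,n/(2v))$, exactly as Tables 1 and 2 do for $N(a,b)$. We do not need the exact multiplicities of each row. Every value has the form $p^{n-2}(1+\epsilon)$ with $|\epsilon|\le c\,p^{-n/2+v+1}$: the Gauss sum contributions are $|G_n|=p^{n/2}$ and $|G_1|^2=p$, so $|\epsilon|$ is at most $(p-1)p^{-n/2+1}$ outside case (4), and at most $(p-1)p^{-n/2+v+1}$ in case (4). Writing $\log_2 N=(n-2)\log_2 p+\log_2(1+\epsilon)$ and using that the probabilities sum to $1$, we get
\[
H(\mathscr{E}|\mathscr{M}^2)=(n-2)\log_2 p+O(\max|\epsilon|).
\]
Similarly, from the tables already used in Theorem \ref{thm4}, $H(\mathscr{E}|\mathscr{M})=(n-1)\log_2 p+O(p^{-n/2+v})$. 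Hence
\[
Q'=p^{-1}\,2^{O(p^{-n/2+v+1})}.
\]

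Second, Theorem \ref{thm2} gives $P_S=p^{-1}(1+O(p^{-n/2+v+1}))$ in every case. For instance, when $n$ is odd, $P_S=\frac1p+\frac{p-1}{p^{(n+1)/2}}$, and the other cases are handled by expanding the denominators. So $Q'/P_S\to1$ as soon as the error exponent $-n/2+v+1\to-\infty$.

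The main obstacle is case (4), $n/(2v)$ even. Here the errors are only of size $p^{-n/2+v}$, and they tend to $0$ only when $n/2-v\to\infty$. Since $n/(2v)\ge 2$ gives $v\le n/4$, we have $n/2-v\ge n/4\to\infty$, so the limit still holds. We will state this explicitly, in the same asymptotic sense as Theorem \ref{thm4}. A secondary point is to handle the linearly dependent case $a=-t\alpha$ and the rows with $N=p^{n-2}$ exactly. These contribute $\epsilon=0$, so they need no separate multiplicity count, which is precisely why a bound-based argument is preferable to reproducing a full table.
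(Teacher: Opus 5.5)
Your argument is correct, and it takes a genuinely different and lighter route than the paper.

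The paper tries to evaluate $H(\mathscr{E}|\mathscr{M}^2)$ essentially exactly. For $n$ odd it lists each value of $N(a,b,\alpha,\beta)$ from Lemma \ref{lem11} together with the number of quadruples $(a,b,\alpha,\beta)$ attaining it. In the case $\frac{n}{2v}$ even, those multiplicities cannot be determined. There it sandwiches $H(\mathscr{E}|\mathscr{M}^2)$ between two explicit sums $H_1,H_2$ and squeezes $Q_1/P_S$. These sums are built from the extreme values of $N$ and from counts of the solvability classes of (\ref{equal1}), (\ref{euqal4}), (\ref{equal7}), obtained via Lemmas \ref{lem13} and \ref{lem6}.

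You replace all of this by two observations. First, $H(\mathscr{E}|\mathscr{M}^2)$ is a probability-weighted average of $\log_2 N(a,b,\alpha,\beta)$, so it lies between the minimum and maximum of $\log_2 N$. Second, uniformly $N=p^{n-2}(1+\epsilon)$ with $\epsilon\to 0$, and the same holds for $N(a,b)$ and $H(\mathscr{E}|\mathscr{M})$. This needs no multiplicities, and not even the fine case split of Lemma \ref{lem11}. Indeed, $N(a,b,\alpha,\beta)=p^{n-2}+p^{-2}\Omega_{\alpha,a}$, where $\Omega_{\alpha,a}$ is a sum of $p(p-1)$ Weil sums. Each has modulus at most $p^{\frac{n}{2}}$ in the permutation case and at most $p^{\frac{n}{2}+v}$ otherwise (Lemmas \ref{lem1} and \ref{lem5}), so your bound on $|\epsilon|$ follows directly. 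Applying the same bounds to the ratio $N(a,b,\alpha,\beta)/N(a,b)$ even gives $P_S=p^{-1}(1+o(1))$ without the exact maximisation of Theorem \ref{thm2}.

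What the paper's route would buy, if its counts are carried through, is explicit formulas for $H(\mathscr{E}|\mathscr{M}^2)$ in the permutation cases. For the asymptotic statement these are not needed. Your route is also insensitive to the delicate multiplicity counts on which the paper's tables depend.

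One bookkeeping point: do not merge the error terms into the single exponent $-\frac{n}{2}+v+1$. Outside case (4) the gcd $v$ can be large, e.g. $v=n$ when $n\mid r$, or $v=\frac{n}{2}$ in case (3). Then $p^{-\frac{n}{2}+v+1}$ does not tend to $0$, and neither does your $O(p^{-\frac{n}{2}+v})$ for $H(\mathscr{E}|\mathscr{M})$. There you must use the sharper relative errors $(p-1)p^{-\frac{n}{2}+1}$ for $N(a,b,\alpha,\beta)$ and $(p-1)p^{-\frac{n}{2}}$ for $N(a,b)$. These hold because in cases (1)--(3), $X^{p^{2r}}+X$ is a permutation polynomial and every relevant Weil sum has modulus exactly $p^{\frac{n}{2}}$. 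With this, and $v\le\frac{n}{4}$ in case (4), the relative error is at most $(p-1)p^{-\frac{n}{4}+1}$ in every case. So the limit even holds uniformly in $r$.
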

\begin{proof}
From the proof of Theorem \ref{thm4}, the quantity $H(\mathscr{E}|\mathscr{M})$ is known. We next compute $H(\mathscr{E}|\mathscr{M}^2)$. Since all source states and encoding rules are used with equal probability, suppose that two distinct messages $m_1=(a_1,b_1)$ and $m_2=(a_2,b_2 )$ are observed, both generated by the same key $k$, we shall count the number of pairs $(s_1,k)$ and $(s_2,k)$ satisfying $(a_1,b_1)=\left ( s_1+  k^{p^{r} },Tr\left ( s_1k \right )\right )$ and $(a_2,b_2)=\left ( s_2+  k^{p^{r} },Tr\left ( s_2k \right )\right )$ with $s_1\neq s_2$. Namely the number of $k$ such that
\begin{eqnarray*}
\left\{\begin{array}{ll}
  b_1=Tr(a_1k-k^{p^{r}+  1})\\
  b_2=Tr(a_2k-k^{p^{r}+  1})
\end{array}\right. \Longleftrightarrow
\left\{\begin{array}{ll}
  b_1=Tr(a_1k-k^{p^{r}+  1})\\
  b_1-b_2=Tr((a_1-a_2)k),
\end{array}\right.\\
\end{eqnarray*}
then we can obtain the probability distribution of the messages. For convenience, denote
\[N\left ( a, b, \alpha , \beta  \right ) =\left | \left \{ k\in \mathbb{F} _{p^{n} }|Tr\left (   ak -k^{p^{r}+  1} \right )=b \ \text{and}\ Tr\left ( \alpha k \right ) =\beta     \right \}  \right |,\]
where $\alpha(\neq 0), a \in \mathbb{F} _{p^{n} }$ and $b , \beta \in \mathbb{F} _{p}$. Let nations be defined as before, based on Lemma \ref{lem11}, we prove Theorem \ref{thm5} by considering the following cases.

If $n$ is odd, by Lemma \ref{lem11}, we can get
\begin{eqnarray*}
N\left ( a, b, \alpha , \beta  \right )&=&\left\{\begin{array}{ll}
p^{n-2},&  \quad \text{ if $\lambda_{\alpha}=0$ and $\Delta\neq 0$},\\
&\quad \text{or if $\lambda_{\alpha}=0$ and $\Delta=0$ and $\lambda_a-b=0$},\\
p^{n-2}+\frac{1}{p}G_nG_1\eta^\prime(b-\lambda_a), &\quad \text{ if $\lambda_{\alpha}=0$ and $\Delta=0$ and $\lambda_a-b\neq0$},\\
p^{n-2}+\frac{1}{p^2}(p-1)G_nG_1\eta^\prime(-\lambda_{\alpha}), & \quad \text{ if $\lambda_{\alpha}\neq 0$ and $\Delta=0$},\\
p^{n-2}-\frac{1}{p^2}G_nG_1\eta^\prime(-\lambda_{\alpha}), & \quad \text{if $\lambda_{\alpha}\neq 0$ and $\Delta\neq0$}.
\end{array}\right.
\end{eqnarray*}
Based on the discussions above and Lemma \ref{lem10}, the following table is obtained.
\begin{table}[H]
  \begin{center}
    \caption{}
    \begin{tabular}{l|c|c|r} 
      $\text{conditions} $ & $\text{Uncertainty of $(s_1,s_2,k)$}$ & $\text{Probability of $m_1$ and $m_2$}$&$\text{Number of $m_1$ and $m_2$}$ \\
      \hline
      $\lambda_{\alpha}=0,\Delta\neq 0$ & $\log_{2}{p^{n-2}} $ & $\frac{p^{n-2} }{(p^n-1)p^{2n} } $&$(p-1)(p^{n-1}-1)p^{n+1}$\\
     $\lambda_{\alpha}=0,\Delta=0,\lambda_a-b=0$ & $\log_{2}{p^{n-2}} $ & $\frac{p^{n-2} }{(p^n-1)p^{2n} } $&$(p^{n-1}-1)p^{n}$\\
     $\lambda_{\alpha}=0,\Delta=0,\lambda_a-b\neq0$ & $\log_{2}{( p^{n-2}+\frac{1}{p}G_nG_1\eta^\prime(b-\lambda_a))}  $ &$\frac{p^{n-2}+\frac{1}{p}G_nG_1\eta^\prime(b-\lambda_a) }{(p^n-1)p^{2n}} $&$(p-1)(p^{n-1}-1)p^{n}$\\
     $\lambda_{\alpha}\neq0,\Delta=0$ & $ \log_{2}{(p^{n-2}+\frac{1}{p^2}(p-1)G_nG_1\eta^\prime(-\lambda_{\alpha}))}$ &$\frac{p^{n-2}+\frac{1}{p^2}(p-1)G_nG_1\eta^\prime(-\lambda_{\alpha}) }{(p^n-1)p^{2n} }$&$(p-1)p^{2n}$\\
     $\lambda_{\alpha}\neq0,\Delta\neq0$ & $ \log_{2}{(p^{n-2}-\frac{1}{p^2}G_nG_1\eta^\prime(-\lambda_{\alpha}))}$ &$\frac{p^{n-2}-\frac{1}{p^2}G_nG_1\eta^\prime(-\lambda_{\alpha}) }{(p^n-1)p^{2n} }$&$(p-1)^2p^{2n}$\\
    \end{tabular}
  \end{center}
\end{table}
Then, let $ Q_1=2^{H(\mathscr{E} |\mathscr{M}^2 )-H(\mathscr{E} |\mathscr{M} )}$,  we have
\[
\begin{split}
  H(\mathscr{E}  |\mathscr{M}^2 )&=(p^2-p+1)(p^{n-1}-1)p^{n}\frac{p^{n-2} }{(p^n-1)p^{2n} }\log_{2}{p^{n-2}}
  +(p-1)(p^{n-1}-1)p^{n}\frac{p^{n-2}+\frac{1}{p}\theta_1 }{(p^n-1)p^{2n}} \log_{2}{( p^{n-2}+\frac{1}{p}\theta_1)}\\
&+(p-1)p^{2n}\frac{p^{n-2}+\frac{1}{p^2}(p-1)\theta }{(p^n-1)p^{2n} }\log_{2}{(p^{n-2}+\frac{1}{p^2}(p-1)\theta)}
+(p-1)^2p^{2n}\frac{p^{n-2}-\frac{1}{p^2}\theta }{(p^n-1)p^{2n} }\log_{2}{(p^{n-2}-\frac{1}{p^2}\theta)}
\end{split}
\]
and
\[\begin{split}
    Q_1&=2^{H(\mathscr{E} |\mathscr{M}^2 )-H(\mathscr{E} |\mathscr{M} )}\\
     &=\frac{
p^{(n-2)\left(1+ \frac{(p-1)(p^{n-1}-1)(\frac{1}{p}\theta_1)}{(p^n-1)p^{n}}+ \frac{(p-1)(\frac{1}{p^2}(p-1)\theta)}{p^{n}-1}-\frac{(p-1)^2(\frac{1}{p^2}\theta)}{p^n-1}\right)}
}{p^{(n-1)\left( 1 + \frac{(p-1)p^{\frac{n-1}{2}}\eta'(b)+(p^{n+1}-p^n-p+1)p^{\frac{n-1}{2}}\eta ^{\prime }(b-\lambda_a)}{p^{2n}}\right)}
\cdot
\bigl( 1 + p^{-\frac{n-1}{2}} \eta ^{\prime }(b) \bigr)^{(p-1)\frac{p^{n-1}+p^{\frac{n-1}{2}}\eta ^{\prime }(b)}{p^{2n}}}}\\
&\times \frac{(1+\frac{\theta_1}{p^{n-1}})^{\frac{(p-1)(p^{n-1}-1)(p^{n-2}+\frac{1}{p}\theta_1)}{(p^n-1)p^{n}}}(1+\frac{(p-1)\theta}{p^n})^{\frac{(p-1)(p^{n-2}+\frac{1}{p^2}(p-1)\theta)}{p^{n}-1}}
+(1-\frac{\theta}{p^n})^{\frac{(p-1)^2(p^{n-2}-\frac{1}{p^2}\theta)}{p^n-1}}}{\bigl( 1 + p^{-\frac{n-1}{2}} \eta ^{\prime }(b-\lambda_a) \bigr)^{(p^n-1)(p-1)\frac{p^{n-1}+p^{\frac{n-1}{2}}\eta ^{\prime }(b-\lambda_a)}{p^{2n}}}},
\end{split}
\]
where $\theta=G_nG_1\eta^\prime(-\lambda_{\alpha}),\theta_1=G_nG_1\eta^\prime(b-\lambda_a)$. By Theorem \ref{thm2}, we have $P_S=\frac{p^{n-2}+(p-1)p^{\frac{n-3}{2}}}{p^{n-1}}$, it follows that
\[\lim_{n \to \infty} \frac{Q_1}{P_{S}} =1.\]

If $n$ is even and $\frac{n}{v} $ is odd or $\frac{n}{v} $ is even and $\frac{n}{2v} $ is odd, using the same method, one can get
\[\lim_{n \to \infty} \frac{Q_1}{P_{S}} =1,\]
we omit the details of calculation for convenience.

If $\frac{n}{v}$ and $\frac{n}{2v}$ are even. Then since $f(X)=X^{p^{2r}}+X$ is not a permutation polynomial, for each value of $N\left ( a, b, \alpha , \beta  \right )$, it is difficult to determine the corresponding  number of $(a, b, \alpha , \beta)$ satisfying the condition in Lemma \ref{lem11}. Consequently, we cannot compute the exact value of $H(\mathscr{E}  |\mathscr{M}^2 )$. However, we can establish upper and lower bounds for $H(\mathscr{E}  |\mathscr{M}^2 )$, and then apply the squeeze theorem for limits to prove Theorem \ref{thm5}. By Lemma \ref{lem11}, we obtain that
\begin{eqnarray*}
N\left ( a, b, \alpha , \beta  \right )&=&\left\{\begin{array}{ll}
p^{n-2}, & \quad \text{if (\ref{equal1}) and (\ref{equal7}) are solvable and $\lambda_{\alpha}\neq0$ and $\Delta=0$},\\
& \quad \text{or if (\ref{equal1}) and (\ref{equal7}) are unsolvable},\\
&  \quad \text{or if (\ref{equal1}) and (\ref{equal7}) are solvable and $\lambda_{\alpha}=0$ and $\Delta\neq 0$},\\
p^{n-2}-(p-1)p^{\frac{n}{2}+v-1}, &  \quad \text{if (\ref{equal1}) and (\ref{equal7}) are solvable and $\lambda_{\alpha}=0$ and  $\Delta=0$ and $\lambda_a-b=0$},\\
p^{n-2}+p^{\frac{n}{2}+v-1}, &  \quad \text{if (\ref{equal1}) and (\ref{equal7}) are solvable and $\lambda_{\alpha}=0$ and $\Delta=0$ and $\lambda_a-b\neq 0$},\\
p^{n-2}-p^{\frac{n}{2}+v-2}G_1^2\eta^\prime(\Delta), & \quad \text{if (\ref{equal1}) and (\ref{equal7}) are solvable and $\lambda_{\alpha}\neq0$ and $\Delta\neq 0$},\\
p^{n-2}-(p-1)p^{\frac{n}{2}+v-2}, & \quad \text{if (\ref{equal1}) is solvable and (\ref{equal7}) is unsolvable and $\lambda_a-b=0$},\\
& \quad \text{or if and (\ref{equal1}) is unsolvable and (\ref{equal7}) is solvable and $\lambda_{t_0}-b-t_0\beta=0$},\\
p^{n-2}+p^{\frac{n}{2}+v-2}, & \quad \text{if and (\ref{equal1}) is solvable and (\ref{equal7}) is unsolvable and $\lambda_a-b\neq0$},\\
&\quad \text{or if and (\ref{equal1}) is unsolvable and (\ref{equal7}) is solvable and $\lambda_{t_0}-b-t_0\beta\neq0$}.\\
\end{array}\right.
\end{eqnarray*}
According to the relation of solutions to (\ref{equal1}) (\ref{euqal4}) (\ref{equal7}) and Lemma \ref{lem6}, the following table is obtained.
\begin{table}[H]
  \begin{center}
    \caption{}
    \begin{tabular}{l|c|r} 
      $\text{conditions} $ & $\text{the number of $(a, b, \alpha , \beta)$}$ & $\text{the possible value of $N\left ( a, b, \alpha , \beta  \right )$}$ \\
      \hline
      $\text{if (\ref{equal1}) and (\ref{equal7}) are solvable}$ & $(p^{n-2v}-1)p^{n-2v+2}$ & $p^{n-2},p^{n-2}-(p-1)p^{\frac{n}{2}+v-1},p^{n-2}\pm p^{\frac{n}{2}+v-1} $\\
     $\text{if (\ref{equal1}) is solvable and (\ref{equal7}) is unsolvable}$ & $(p^n-p^{n-2v})p^{n-2v+2}$ & $p^{n-2}+p^{\frac{n}{2}+v-2},p^{n-2}+(p-1)p^{\frac{n}{2}+v-2} $\\
     $\text{if  (\ref{equal1}) is unsolvable and (\ref{equal7}) is solvable}$ & $(p^n-p^{n-2v})p^{n-2v+2}$ &$p^{n-2}+p^{\frac{n}{2}+v-2},p^{n-2}+(p-1)p^{\frac{n}{2}+v-2} $\\
     $\text{if (\ref{equal1}) and (\ref{equal7}) are unsolvable}$ & $(p^n-p^{n-2v}-1)(p^n-p^{n-2v})p^2$ &$p^{n-2}$\\
    \end{tabular}
  \end{center}
\end{table}
Then, we have
\[H_1<H(\mathscr{E}  |\mathscr{M}^2)<H_2,\]
where
\[
\begin{split}
  H_1&=(p^{n-2v}-1)p^{n-2v+2}\frac{p^{n-2}-(p-1)p^{\frac{n}{2}+v-1}}{(p^n-1)p^{2n} }\log_{2}({p^{n-2}-(p-1)p^{\frac{n}{2}+v-1}})\\
&+2(p^n-p^{n-2v})p^{n-2v+2}\frac{p^{n-2}+p^{\frac{n}{2}+v-2} }{(p^n-1)p^{2n}} \log_{2}{(p^{n-2}+p^{\frac{n}{2}+v-2}})\\
&+(p^n-p^{n-2v}-1)(p^n-p^{n-2v})p^2\frac{p^{n-2}}{(p^n-1)p^{2n}}\log_2({p^{n-2}})
\end{split}
\]
and
\[
\begin{split}
  H_2&=(p^{n-2v}-1)p^{n-2v+2}\frac{p^{n-2}+p^{\frac{n}{2}+v-1}}{(p^n-1)p^{2n} }\log_{2}({p^{n-2}+p^{\frac{n}{2}+v-1}})\\
&+2(p^n-p^{n-2v})p^{n-2v+2}\frac{p^{n-2}+(p-1)p^{\frac{n}{2}+v-2}}{(p^n-1)p^{2n}} \log_{2}{(p^{n-2}+(p-1)p^{\frac{n}{2}+v-2}})\\
&+(p^n-p^{n-2v}-1)(p^n-p^{n-2v})p^2\frac{p^{n-2}}{(p^n-1)p^{2n}}\log_2({p^{n-2}}).
\end{split}
\]
It follows that
\[Q_2<Q_1<Q_3,\]
where
\[\begin{split}
    Q_2&=2^{H_1-H(\mathscr{E} |\mathscr{M} )}\\
    &=\frac{p^{(n-2)\left( 1+\frac{-(p-1)(p^{\frac{5n}{2}-3v+1}-p^{\frac{3n}{2}-3v+1})+2(p^{\frac{5n}{2}-v}-p^{\frac{5n}{2}-3v})}{(p^n-1)p^{2n}} \right)}
}{p^{n-1}
\cdot
\Bigl( 1 - (p-1)p^{-\frac{n}{2}+v} \Bigr)^{\frac{(1+p^{n-2v})(p^{n-1}-(p-1)p^{\frac{n}{2}+v-1})}{p^{2n}}}
\cdot
\Bigl( 1 + p^{-\frac{n}{2}+v} \Bigr)^{\frac{(p-1)(1+p^{n-2v})(p^{n-1}+p^{\frac{n}{2}+v-1})}{p^{2n}}}}\\
&\times\frac{(1-(p-1)p^{-\frac{n}{2}+v+1})^{\frac{(p^{n-2v}-1)(p^{n-2}-(p-1)p^{\frac{n}{2}+v-1})p^{n-2v+2}}{(p^n-1)p^{2n}}}(1+p^{-\frac{n}{2}+v})^{\frac{(p^n-p^{n-2v}-1)(p^n-p^{n-2v})p^n}{(p^{n}-1)p^{2n}}}
}{1},
    \end{split}
\]
and
\[\begin{split}
    Q_3&=2^{H_2-H(\mathscr{E} |\mathscr{M} )}\\
    &=\frac{p^{(n-2)\left( 1+\frac{(p^{\frac{5n}{2}-3v+1}-p^{\frac{3n}{2}-3v+1})+2(p-1)(p^{\frac{5n}{2}-v}-p^{\frac{5n}{2}-3v})}{(p^n-1)p^{2n}}\right)}
}{p^{n-1}
\cdot
\Bigl( 1 - (p-1)p^{-\frac{n}{2}+v} \Bigr)^{\frac{(1+p^{n-2v})(p^{n-1}-(p-1)p^{\frac{n}{2}+v-1})}{p^{2n}}}
\cdot
\Bigl( 1 + p^{-\frac{n}{2}+v} \Bigr)^{\frac{(p-1)(1+p^{n-2v})(p^{n-1}+p^{\frac{n}{2}+v-1})}{p^{2n}}}}\\
&\times\frac{(1+p^{-\frac{n}{2}+v+1})^{\frac{(p^{n-2v}-1)(p^{n-2}-(p-1)p^{\frac{n}{2}+v-1})p^{n-2v+2}}{(p^n-1)p^{2n}}}(1+(p-1)p^{-\frac{n}{2}+v})^{\frac{(p^n-p^{n-2v}-1)(p^n-p^{n-2v})p^n}{(p^{n}-1)p^{2n}}}
}{1}.
    \end{split}
\]
By Theorem \ref{thm2}, we have $P_S=\frac{p^{n-2}}{p^{n-1}-(p-1)p^{\frac{n}{2}+v-1}}$, then
\[\lim_{n \to \infty} \frac{Q_2}{P_{S}}=1=\lim_{n \to \infty} \frac{Q_3}{P_{S}}.\]
Therefore, we obtain that
\[\lim_{n \to \infty} \frac{Q_1}{P_{S}}=1.\]
This completes the proof of Theorem \ref{thm5}.
\end{proof}

\subsection{Concluding remarks}
This paper constructs a class of authentication codes with secrecy using algebraic structures over finite fields. The codes have simple encoding rules and are easy to implement. The security of the authentication codes is analyzed against impersonation attacks and substitution attacks. Using Weil sums and exponential sums, the maximum success probabilities $P_{I} $ and $P_{S} $ are explicitly derived. The codes are shown to be asymptotically optimal, as both $P_{I} $ and $P_{S} $ asymptotically achieve the information-theoretic and combinatorial lower bounds.

\end{document}